\documentclass[12pt,onecolumn,oneside,a4paper,peerreview]{IEEEtran}

\usepackage{cite}
\usepackage{graphicx}
\usepackage{amsmath}
\usepackage{times}
\usepackage{latexsym}
\usepackage{graphicx}
\usepackage{bm}
\usepackage{amssymb}
\usepackage[center]{caption2}
\usepackage{stfloats}
\usepackage{cases}
\usepackage{url}
\usepackage{array}
\usepackage{setspace}
\usepackage{fancyhdr}
\usepackage{color}
\usepackage{subfigure}
\usepackage{chemarrow}
\usepackage{extarrows}

\newtheorem{theorem}{Theorem}
\newtheorem{lemma}{Lemma}
\newtheorem{corollary}{Corollary}
\newtheorem{proposition}{Proposition}

\def\proof{\noindent\hspace{2em}{\itshape Proof: }}
\def\endproof{\hspace*{\fill}~$\square$\par\endtrivlist\unskip}

%
\setlength{\textheight}{236mm}
\setlength{\voffset}{8mm}
\setlength{\textwidth}{6.6in} \addtolength{\hoffset}{0.33in}
\doublespacing

\begin{document}
\title{Sum-Rate and Power Scaling of Massive MIMO Systems with Channel Aging}
\author{Chuili Kong, Caijun Zhong, Anastasios K. Papazafeiropoulos, Michail Matthaiou, and Zhaoyang Zhang
\thanks{C. Kong, C. Zhong and Z. Zhang are with the Institute of Information and Communication Engineering, Zhejiang University, China. In particular, C. Zhong is also affiliated with the National Mobile Communications Research Laboratory, Southeast University, Nanjing, China (email: caijunzhong@zju.edu.cn).}
\thanks{Anastasios K. Papazafeiropoulos is with the Department of Communications and Signal Processing Group, Imperial College London, London, U.K. (email: a.papazafeiropoulos@imperial.ac.uk).}
\thanks{M. Matthaiou is with the School of Electronics, Electrical Engineering and Computer Science, Queen's University Belfast, Belfast, U.K. He is also affiliated with the Department of Signals and Systems, Chalmers University of Technology, 412 96, Gothenburg, Sweden (email:
m.matthaiou@qub.ac.uk).}}

\maketitle

\begin{abstract}
This paper investigates the achievable sum-rate of massive multiple-input multiple-output (MIMO) systems in the presence of channel aging. For the uplink, by assuming that the base station (BS) deploys maximum ratio combining (MRC) or zero-forcing (ZF) receivers, we present tight closed-form lower bounds on the achievable sum-rate for both receivers with aged channel state information (CSI). In addition, the benefit of implementing channel prediction methods on the sum-rate is examined, and closed-form sum rate lower bounds are derived. Moreover, the impact of channel aging and channel prediction on the power scaling law is characterized. Extension to the downlink scenario and multi-cell scenario are also considered. It is found that, for a system with/without channel prediction, the transmit power of each user can be scaled down at most by $1/\sqrt{M}$ (where $M$ is the number of BS antennas), which indicates that aged CSI does not degrade the power scaling law, and channel prediction does not enhance the power scaling law; instead, these phenomena affect the achievable sum-rate by degrading or enhancing the effective signal to interference and noise ratio, respectively.
\end{abstract}

\begin{keywords}
Channel aging, channel prediction, massive MIMO, power scaling law, sum-rate.
\end{keywords}
\newpage
\section{Introduction}
In order to meet the exponential growth of mobile and wireless data traffic, the fifth generation wireless systems are expected to deliver a thousand-fold higher capacity \cite{J.G.Andrews}. Among various potential enabling technologies to tackle such challenges, massive MIMO \cite{T.L.Marzetta}, where the BS deploys an unprecedented number of antennas to simultaneously serve a much smaller number of users, stands out as a promising candidate because of its remarkable capability of substantially improving both the spectral and energy efficiency \cite{E.G.Larsson,F.Rusek}. As such, massive MIMO technology has attracted enormous research attention from both academia and industry.

The gains of massive MIMO systems were initially demonstrated by assuming an ideal propagation environment. As such, understanding the performance limits of massive MIMO systems in realistic propagation environments is of paramount importance. Thus far, the impact of various practical channel imperfections on the performance of massive MIMO systems has been studied in literature by including line-of-sight effect \cite{Q.Zhang,C.Kong}, spatial correlation \cite{J.Hoydis,J.Zhang,H.Ngo0,C.Masouros}, pilot contamination \cite{J.Jose,N.Krishnan2}, pilot design for channel estimation \cite{J.Choi1,S.Noh}, channel estimation error \cite{H.Q.Ngo,C.K.Wen}, channel quantization \cite{J.Li,D.Ying,J.Choi2,J.Choi3}, transceiver hardware impairments \cite{Emil2,X.Zhang0}, and phase noise drift \cite{A.Pitarokoilis}.

In addition to the above mentioned channel/system imperfections, there is another important aspect of practical channel impairments known as {\it channel aging}; this refers to the phenomenon that channel varies between when it is learned via estimation and when it is used for precoding or detection because of the random fluctuation of the propagation channel due to the relative movement between the users and the BS, as well as, the processing delay at the BS. Despite its significance, very few works have investigated its impact on the performance of massive MIMO systems. Capitalizing on the deterministic equivalent analysis framework \cite{J.Hoydis}, the effect of inaccurate CSI due to channel aging was first studied in \cite{K.T.Truong1} by assuming matched filter at the BS. Later on, the analysis was extended to the scenario with more sophisticated receivers, such as regularized ZF precoders (downlink) \cite{A.K.Papazafeiropoulos1} and minimum-mean-square-error (MMSE) receivers (uplink) \cite{A.K.Papazafeiropoulos2}.

The analytical expressions developed in \cite{K.T.Truong1,A.K.Papazafeiropoulos1,A.K.Papazafeiropoulos2} are derived by employing the deterministic equivalent approach which relies on the key assumption of large system regime, i.e., $M \rightarrow \infty$ and $K \rightarrow \infty$, where $M$ is the number of BS antennas and $K$ is the number of users, and they only serve as accurate approximations. Hence, it is also of great interest to find tight sum rate bounds valid for arbitrary finite $M$ and $K$, which provide an alternative perspective of quantifying the sum rate. In this regard, we propose tractable and tight lower bounds on the achievable sum rate of the system. Another major limitation of the expressions in \cite{K.T.Truong1,A.K.Papazafeiropoulos1,A.K.Papazafeiropoulos2} is that they are in general too complicated to yield any useful insights into the impact of channel aging on the system performance. Motivated by this, we derive simple and informative power scaling laws which shed light into how the channel aging affects the achievable rate. In addition to the multi-cell scenario considered in \cite{K.T.Truong1,A.K.Papazafeiropoulos1,A.K.Papazafeiropoulos2}, the single cell scenario is also studied in detail in the current paper, mainly motivated by the following reasons: 1) Compared to the multi-cell scenario, the single-cell scenario provides more engineering insights as reported in many prior works \cite{L.You,Emil5,Y.Lim,A.Muller}; 2) The analytical approach developed for the asymptotic analysis of the single-cell scenario could also be applied for the multi-cell scenario; 3) With a relatively large frequency reuse factor, the single cell performance can be actually attained \cite{H.Q.Ngo}; 4) In practice, single cell massive MIMO deployment has also been considered for indoor scenarios, see for instance \cite{B.Panzner}.

Specifically, the main contributions of the paper are outlined as follows:
\begin{itemize}
  \item We obtain tight lower bounds on the achievable sum-rate of single-cell uplink massive MIMO systems employing MRC or ZF receivers with channel aging, which are valid for arbitrary number of BS antennas $M$ and number of users $K$, thereby enabling efficient evaluation of the achievable sum-rate in the presence of aged CSI.
  \item Taking into consideration channel prediction, we derive tight lower bounds on the sum-rate of single-cell uplink
massive MIMO systems employing MRC and ZF receivers.
  \item For both scenarios with/without channel prediction, we characterize the power scaling law of the system. It is shown that channel aging does not reduce the power scaling law, and using channel prediction method does not improve the power scaling law.
\item Finally, we extend the power scaling law analysis to the single-cell downlink and multi-cell uplink scenarios. It turns out that the single-cell downlink case achieves the same power scaling law, while the multi-cell uplink scenario exhibits a different power scaling law due to the pilot contamination effect.
    \end{itemize}

The remainder of this paper is organized as follows: In Section II, we describe the system model incorporating the combined effects of channel estimation error and channel aging. Section III presents the achievable uplink rate with aged CSI for MRC and ZF receivers in the single-cell uplink scenario. In Section IV, the achievable uplink rate with predicted CSI is studied. Then, Section V extends the analysis to the single-cell downlink scenario. The multi-cell scenario is considered in Section VI. Numerical results are provided in Section VII. Finally, Section VIII gives a brief summary.

{\it Notation}: We use bold upper case letters to denote matrices, bold lower case letters to denote vectors and lower case letters to denote scalars. Moreover, $(\cdot)^{\dag}$, $(\cdot)^{*}$, $(\cdot)^{T}$, and $(\cdot)^{-1}$ represent the conjugate transpose operator, the conjugate operator, the transpose operator, and the matrix inverse, respectively. Also, $|| \cdot ||$ is the Euclidian 2-norm, $| \cdot |$ is the absolute value, and ${\left[ {\bf{A}} \right]_{mn}}$ gives the $(m,n)$-th entry of $\bf{A}$. In addition, ${{\cal CN} (0,1)}$ denotes a scalar complex circular Gaussian random variable with zero mean and unit variance, while ${{\bf{I}}_k}$ is the identity matrix of size $k$. Finally, the statistical expectation operator is represented by ${\rm E}\{\cdot\}$, while the trace operator and the Kronecker product are denoted by ${\text{tr}}(\cdot)$ and $\otimes$, respectively.
\section{Single-Cell Uplink Model}
We start with the uplink of a single-cell MIMO system, which is composed of a central BS with $M$ antennas and $K$ $(K \leq M)$ noncooperative users with single antenna each. It is assumed that the propagation channel exhibits flat fading, and the channel coefficients do not change within one symbol, but vary slowly from symbol to symbol as in \cite{K.T.Truong1}. Therefore, for the \emph{n}-th symbol, the $M \times 1$ received signal at the BS is given by
\begin{align}\label{eq:y_ul}
  {\bf y}[n] = \sqrt{p_u}{\bf G}[n]{\bf x}[n] + {\bf z}[n],
\end{align}
where ${\bf G}[n]$ represents the $M \times K$ channel matrix between the BS and the $K$ users, i.e., $g_{mk}[n] = \left[{\bf G}[n]\right]_{mk}$ denotes the channel coefficient of the communication link between the \emph{m}-th antenna of the BS and the \emph{k}-th user; $p_u$ is the average transmit power of each individual user;  ${\bf x}[n]$ is a $K \times 1$ vector consisting of the transmit symbols of $K$ users with unit power; and ${\bf z}[n]$ is the zero-mean additive white Gaussian noise with unit variance.

The channel coefficient $g_{mk}[n]$ can be written as
 \begin{align}
  g_{mk}[n] = h_{mk}[n] \sqrt{\beta_k},
\end{align}
where $h_{mk}[n]$ is the small-scale fading coefficient for the link from the \emph{k}-th user to the \emph{m}-th antenna of the BS, which is assumed to be independent and identically distributed (i.i.d.) ${{\cal CN} (0,1)}$, and $\beta_k$ models the large-scale effect including shadowing and pathloss, which is assumed to remain constant for all $n$. Hence, ${\bf G}[n]$ can be expressed in a matrix form as
\begin{align}
  {\bf G}[n] = {\bf H}[n]{\bf D}^{\frac{1}{2}},
\end{align}
where ${\bf H}[n]$ is an $M \times K$ matrix with $[{\bf H}[n]]_{mk} = h_{mk}[n]$, and $\bf D$ is a $K \times K$ diagonal matrix with $[{\bf D}]_{kk} = \beta_k$.
\subsection{Channel Estimation}
The BS estimates the channels using uplink pilots. Let $\tau$ be the length of the training period, then, the pilot sequences used by the $K$ users can be represented by a $K \times \tau$ matrix $\sqrt{p_p}{\bf \Phi}$ ($\tau \geq K$) satisfying ${\bf \Phi}{\bf \Phi}^\dag = {\bf I}_K$, where $p_p \triangleq \tau p_u$. Therefore, the $M \times \tau$ received pilot matrix at the BS is given by \cite{K.T.Truong1,A.K.Papazafeiropoulos1,A.K.Papazafeiropoulos2},
\begin{align}
  {\bf J}[n] = \sqrt{p_p}{\bf G}[n]{\bf \Phi} + {\bf {\tilde Z}}[n],
\end{align}
where ${\bf N}[n]$ is an $M \times \tau$ noise matrix whose elements are i.i.d. ${\cal {CN}}(0,1)$. To estimate ${\bf G}[n]$, the BS first correlates ${\bf J}[n]$ with ${\bf \Phi}^\dag$ to obtain
\begin{align}
  {\tilde{\bf Y}}[n] = \frac{1}{\sqrt{p_p}}{\bf J}[n]{\bf \Phi}^\dag,
\end{align}
which gives the following observation of the channel vector from user $k$ to the BS
\begin{align}
  {\tilde {\bf y}}_{k}[n] = {\bf g}_k[n] + \frac{1}{\sqrt{p_p}}{\bf b}_k[n],
\end{align}
where ${\bf g}_k[n]$ and ${\bf b}_k[n]$ are the \emph{k}-th columns of the matrices ${\bf G}[n]$ and ${\bf B}[n]\triangleq {\bf {\tilde Z}}[n]{\bf \Phi}^\dag$, respectively. Since ${\bf \Phi}{\bf \Phi}^\dag = {\bf I}_K$, ${\bf B}[n]$ has i.i.d ${\cal CN}(0,1)$ elements.

As in \cite{H.Q.Ngo}, the MMSE estimate of ${\bf G}[n]$, given $\tilde{\bf Y}[n]$, is
\begin{align}\label{eq:G_hat}
  {\hat{\bf G}}[n] = {\tilde{\bf Y}}_p[n]{\tilde {\bf D}} = \left({\bf G}{[n]} + \frac{1}{\sqrt{p_p}}{\bf W}[n]\right){\tilde {\bf D}},
\end{align}
where ${\tilde {\bf D}} \triangleq (\frac{1}{p_p}{\bf D}^{-1} + {\bf I}_K)^{-1}$. As such, ${\bf g}_k[n]$ can be decomposed into
\begin{align}\label{eq:g_esti_error}
  {\bf g}_k[n] = {\hat {\bf g}}_k[n] + {\tilde {\bf g}}_k[n],
\end{align}
where ${\hat {\bf g}}_k[n]$ is the \emph{k}-th column of ${\hat {\bf G}}[n]$, and ${\tilde {\bf g}}_k[n]$ is the estimation error vector for the \emph{k}-th user. After some simple algebraic manipulations based on \eqref{eq:G_hat}, it can be shown that each element of ${\hat {\bf g}}_k[n]$ is a Gaussian random variable with zero mean and variance $\frac{p_p \beta_k^2}{1+p_p \beta_k}$. Furthermore, ${\hat {\bf g}}_k[n]$ and ${\tilde {\bf g}}_k[n]$ are independent due to the orthogonality property of linear MMSE estimators. At this point, it is worth mentioning that there are different types of channel error models, i.e., unbounded error (usually modeled as Gaussian distributed, such as the one considered here) and bounded error (such as ball or ellipsoid error, see references \cite{N.Vucic,M.F.Hanif,E.Bjornson}); Also, the ellipsoid error model considered in \cite{N.Vucic,M.F.Hanif,E.Bjornson} can mathematically correspond to the Gaussion error vector given by the second term in \eqref{eq:g_esti_error}.
\subsection{Channel Aging}
In practice, due to the random fluctuations of the propagation caused by the movement of users and the processing delays at the BS, the channel varies between when it is learned via estimation and when it is applied for precoding or detection. Such phenomenon is referred to as {\it channel aging} in the literature. To investigate the impact of channel aging, we adopt the model proposed in \cite{K.T.Truong1}. As such, the $M \times 1$ channel vector for the \emph{k}-th user at time $n+1$ can be expressed through an autoregressive model of order 1 as
\begin{align}\label{eq:g_chan_aging}
  {\bf g}_k[n+1] = \alpha {\bf g}_k[n] + {\bf e}_k[n+1],
\end{align}
where ${\bf e}_k[n+1]$ is a temporally uncorrelated complex white Gaussian noise process with its elements having variance of $\left( 1- \alpha^2 \right) \beta_k$, and $\alpha$ is a temporal correlation parameter. Considering the Jakes fading model, we have $\alpha = J_0(2\pi f_D T_s)$, where $J_0(\cdot)$ is the zero-order first kind Bessel function, $T_s$ is the channel sampling duration, $f_D$ is the maximum Doppler frequency shift determined by the users' velocity $v$ and carrier frequency $f_c$, as $f_D = \frac{v f_c}{c}$ ($c$ denotes the speed of light). From the properties of the Bessel function, we can easily get $0 \leq |\alpha| \leq 1$. Especially, the smaller $|\alpha|$, the more serious the channel aging effect becomes.

To this end, a model accounting for the combined effects of the channel estimation errors and channel aging effect can be expressed as \cite{K.T.Truong1}
\begin{align}\label{eq:g_esti_ca}
  {\bf g}_k[n+1] &= \alpha {\hat {\bf g}}_k[n] + \underbrace{\alpha {\tilde {\bf g}}_k[n] + {\bf e}_k[n+1]}_{{\tilde {\bf e}}_k[n+1]},
\end{align}
where ${{\tilde {\bf e}}_k[n+1]}$ is independent with ${\hat {\bf g}}_k[n]$ due to the independence between ${\tilde{\bf g}}_k[n]$, ${\bf e}_k[n+1]$, and ${\hat{\bf g}}_k[n]$. As a result, each element of ${{\tilde {\bf e}}_k[n+1]}$ is a complex Gaussian random variable with zero mean and variance $\beta_k - \alpha^2\frac{p_p\beta_k^2}{1+p_p\beta_k}$.
\section{Achievable Uplink Sum-Rate with Channel Aging}\label{sec:section:3}
In this section, we present a detailed analysis of the impact of channel aging on the achievable sum-rate of the system with linear receivers. In particular, two popular linear receivers, namely, MRC and ZF receivers are considered. For both receivers, we derive closed-form lower bounds of the achievable sum-rate with aged CSI. Moreover, the impact of aged CSI on the asymptotic power scaling law is characterized.

As in \cite{K.T.Truong1}, we assume that the large-scale effect $\bf D$ and the temporal correlation parameter $\alpha$ are known at the BS.\footnote{In practice, the large-scale effect varies much slower. Hence, it can be easily estimated by the BS. In addition, given that the velocity of the users can be obtained by the BS,  the temporal correlation parameter $\alpha$ can be accurately estimated by the BS.} Hence, the BS has the following CSI
\begin{align}\label{eq:g_bar_aged}
  {\bar {\bf g}}_k[n+1] = \alpha {\hat {\bf g}}_k[n].
\end{align}

Let ${\hat{\bf A}}[n+1]$ be an $M \times K$ linear detector matrix which depends on the channel ${\bar {\bf G}}[n+1]$, where ${\bar {\bf G}}[n+1] \triangleq [{\bar {\bf g}}_1[n+1], {\bar {\bf g}}_2[n+1], \cdots, {\bar {\bf g}}_K[n+1]]$. By considering linear receivers, the received signal is separated into streams by multiplying ${\hat{\bf A}}^\dag[n+1]$ with ${\bf y}[n+1]$ from \eqref{eq:y_ul} as follows
\begin{align}\label{eq:r_ul}
  {{\bf r}}[n+1] &= {\hat{\bf A}}^\dag[n+1] {\bf y}[n+1]\nonumber\\
  &= \sqrt{p_u}{\hat{\bf A}}^\dag[n+1]{\bf G}[n+1]{\bf x}[n+1] + {\hat{\bf A}}^\dag[n+1]{\bf z}[n+1].
\end{align}

We consider two conventional linear receivers, i.e., MRC and ZF, which are expressed as
\begin{align}
  {\hat{\bf A}}[n+1] =
  \begin{cases}
    {\bar{\bf G}}[n+1], & \text{for MRC},\\
    {\bar{\bf G}}[n+1]\left({\bar{\bf G}}^\dag[n+1] {\bar{\bf G}}[n+1]\right)^{-1}, & \text{for ZF}.
  \end{cases}
\end{align}

Moreover, let ${{r}}_k[n+1]$ and $ x_k[n+1]$ be the \emph{k}-th elements of the $K \times 1$ vectors ${{\bf r}}[n+1]$ and ${\bf x}[n+1]$, respectively. Then, from \eqref{eq:r_ul}, the \emph{k}-th element of ${\bf r}[n+1]$ is given by
\begin{align}\label{eq:r}
  {r}_k[n+1]
  &= \sqrt{p_u}{\hat{\bf a}}_k^\dag[n+1]{\bar{\bf g}}_k[n+1]{x}_k[n+1]
  + \sqrt{p_u} \sum\limits_{i = 1, i \neq k}^K {{\hat{\bf a}}_k^\dag[n+1]{\bar{\bf g}}_i[n+1]{x}_i[n+1]}\notag \\
  &+ \sqrt{p_u} \sum\limits_{i = 1}^K {{\hat{\bf a}}_k^\dag[n+1]({{\bf g}}_i[n+1] - {\bar{\bf g}}_i[n+1]){x}_i[n+1]}
  + {\hat{\bf a}}_k^\dag[n+1]{\bf z}[n+1],
\end{align}
where ${\hat{\bf a}}_k[n+1]$ is the \emph{k}-th column of ${\hat{\bf A}}[n+1]$. The BS treats ${\bar{\bf g}}_k[n+1]$ as the true channel for the \emph{k}-th user, and the part including the last three terms of \eqref{eq:r} is considered as interference plus noise. As in \cite{Q.Zhang,H.Q.Ngo,J.Hoydis}, the combined error ${\bf g}_i[n+1]-\bar{\bf g}_i[n+1]$ is treated as uncorrelated Gaussian noise,
which is a worst-case scenario, therefore leading to the following simple lower bound for the achievable rate of the $k$-th user: \begin{multline}\label{eq:R_k}
 R_k = \\{\rm E}\left\{ { \log_2 \left(1+ \frac{p_u|{\hat{\bf a}}_k^\dag[n+1]{\bar{\bf g}}_k[n+1]|^2}{p_u \sum\limits_{i = 1, i \neq k}^K {|{\hat{\bf a}}_k^\dag[n+1]{\bar{\bf g}}_i[n+1]|^2}
 + ||{\hat{\bf a}}_k[n+1]||^2 \left(p_u\sum\limits_{i = 1}^K {\left(\beta_i - \alpha^2\frac{p_p\beta_i^2}{1+p_p\beta_i}\right)} + 1 \right)} \right)} \right\},
\end{multline}
where the expectation is taken over small-scale fading. Note that the advantage of the expression in (\ref{eq:R_k}) is that it is amenable to algebraic manipulations.

In the sequel, $R_k$ is referred to as the achievable rate of the $k$-th user. Then, the achievable sum-rate of the massive MIMO system is given by
\begin{align}
  R = \frac{T-\tau}{T}\sum_{k=1}^K R_k.
\end{align}

\subsection{MRC Receivers}
By starting with the MRC receivers, we obtain the following key result:
\begin{theorem}\label{theo:MRC_law}
   For MRC receivers, with \underline{a}ged CSI, if each user scales down its transmit power proportionally to $1/M^\gamma$, i.e., $p_u = E_u/M^\gamma$, for fixed $E_u$ and $\gamma >0$, we have
  \begin{align}\label{eq:theo:MRC_law}
  R_k^{\text{a,mrc}} - \log_2 \left(1+ \frac{\alpha^2 \tau E_u^2 \beta_k^2}{M^{2\gamma-1}} \right) \overset{M \rightarrow \infty}{\longrightarrow} 0,
  \end{align}
 where the superscript a is used to denote \underline{a}ged CSI.
\end{theorem}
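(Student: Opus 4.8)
The plan is to specialize the generic bound \eqref{eq:R_k} to the MRC choice $\hat{\bf a}_k[n+1]=\bar{\bf g}_k[n+1]=\alpha\hat{\bf g}_k[n]$ from \eqref{eq:g_bar_aged}, and then track how each term grows under the scaling $p_u=E_u/M^\gamma$ (so that $p_p=\tau E_u/M^\gamma$) as $M\to\infty$. Writing $\delta_k\triangleq p_p\beta_k^2/(1+p_p\beta_k)$ for the per-antenna variance of $\hat{\bf g}_k[n]$ stated after \eqref{eq:G_hat}, the entries of $\bar{\bf g}_k[n+1]$ are i.i.d. ${\cal CN}(0,\alpha^2\delta_k)$, so I would factor $\bar{\bf g}_k[n+1]=\alpha\sqrt{\delta_k}\,{\bf w}_k$ with ${\bf w}_k$ carrying i.i.d. ${\cal CN}(0,1)$ entries that are independent of $M$. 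Substituting into \eqref{eq:R_k} and cancelling the common factor $\alpha^2\delta_k$ from numerator and denominator turns the SINR into
\begin{equation}
{\rm SINR}_k=\frac{p_u\alpha^2\delta_k\|{\bf w}_k\|^4}{p_u\alpha^2\sum_{i\neq k}\delta_i|{\bf w}_k^\dag{\bf w}_i|^2+\|{\bf w}_k\|^2\left(p_u\sum_{i=1}^{K}(\beta_i-\alpha^2\delta_i)+1\right)},
\end{equation}
which isolates all of the $M$-dependence into the scalar prefactors and into the standard Gaussian quadratic forms $\|{\bf w}_k\|^2$ and $|{\bf w}_k^\dag{\bf w}_i|^2$.

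Next I would invoke the strong law of large numbers on these quadratic forms, which is legitimate because $K$ is held fixed while $M\to\infty$. Using $\|{\bf w}_k\|^2=M(1+o(1))$ almost surely and $|{\bf w}_k^\dag{\bf w}_i|^2=O_P(M)$ for $i\neq k$, dividing numerator and denominator by $M$ shows that the numerator behaves as $p_u\alpha^2\delta_k M(1+o(1))$, while in the denominator the cross-interference term is $O_P(p_u^2)\to0$ and the noise term tends to $1$; both limits use that the scaling forces $p_u\to0$ and $\delta_i\to0$. Hence ${\rm SINR}_k/(p_u\alpha^2\delta_k M)\to1$. Substituting $\delta_k=\tau p_u\beta_k^2/(1+\tau p_u\beta_k)$, the deterministic reference level is $p_u\alpha^2\delta_k M=s_M/(1+\tau p_u\beta_k)=s_M(1+o(1))$, where $s_M\triangleq\alpha^2\tau E_u^2\beta_k^2/M^{2\gamma-1}$; therefore ${\rm SINR}_k/s_M\to1$ in probability.

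Finally I would transfer this to the rate $R_k^{\rm a,mrc}={\rm E}\{\log_2(1+{\rm SINR}_k)\}$ by writing $\log_2(1+{\rm SINR}_k)-\log_2(1+s_M)=\log_2\big((1+{\rm SINR}_k)/(1+s_M)\big)$ and checking that this vanishes in each regime: for $\gamma>1/2$ both arguments tend to $0$ and the gap is controlled by ${\rm SINR}_k-s_M=s_M({\rm SINR}_k/s_M-1)\to0$; for $\gamma=1/2$ the reference is constant and continuity of the logarithm suffices; and for $0<\gamma<1/2$ both arguments grow, but their ratio tends to $1$, so the log of the ratio vanishes. I expect the main obstacle to be this last, measure-theoretic step: turning the in-probability statement ${\rm SINR}_k/s_M\to1$ into convergence of the \emph{expected} log, since the reference level $s_M$ itself drifts with $M$. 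This calls for a uniform-integrability argument, which can be supplied by sandwiching $\log_2(1+{\rm SINR}_k)$ between $0$ and a multiple of ${\rm SINR}_k$ together with the uniform negative-moment bound ${\rm E}\{\|{\bf w}_k\|^{-2}\}=1/(M-1)$. Alternatively, and more cleanly, one may first apply Jensen's inequality to obtain the deterministic closed-form lower bound $\tilde R_k^{\rm a,mrc}=\log_2\big(1+({\rm E}\{1/{\rm SINR}_k\})^{-1}\big)$, whose constituent moments ${\rm E}\{\|\bar{\bf g}_k\|^{-2}\}=((M-1)\alpha^2\delta_k)^{-1}$ and ${\rm E}\{|\bar{\bf g}_k^\dag\bar{\bf g}_i|^2/\|\bar{\bf g}_k\|^4\}=\delta_i/((M-1)\delta_k)$ are exact; the stated limit then follows by elementary analysis of this scalar expression, sidestepping the integrability question entirely.
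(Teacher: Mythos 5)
Your proposal follows essentially the same route as the paper's proof: substitute the MRC combiner $\hat{\bf a}_k[n+1]=\alpha\hat{\bf g}_k[n]$ and the power scaling into \eqref{eq:R_k}, then apply the law of large numbers to the Gaussian quadratic forms so that the cross-interference term vanishes, the noise factor tends to one, and the signal term concentrates at $\alpha^2\tau E_u^2\beta_k^2/M^{2\gamma-1}$. The only substantive difference is that you explicitly justify the interchange of limit and expectation (via uniform integrability, or by falling back on the Jensen-type lower bound with exact negative moments, which is precisely the bound of Theorem~\ref{theo:MRC_bound} whose asymptotic tightness the paper establishes separately); the paper passes over this step silently, so your added care is a refinement of the same argument rather than a different method.
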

\begin{proof}
Substituting ${\hat{\bf a}}_k[n+1] = {\bar{\bf g}}_k[n+1] = \alpha {\hat{\bf g}}_k[n]$ and $p_u = E_u/M^\gamma$ into \eqref{eq:R_k}, and after some simple algebraic manipulations, we obtain
\begin{align}\label{eq:R_k_2}
  R_k^\text{a,mrc} = {\rm E}\left\{ { \log_2 \left(1+ \frac{\frac{E_u}{M^\gamma}\frac{1}{M^2}\alpha^2||{\hat{\bf g}}_k[n]||^4}{\frac{E_u}{M^\gamma}\frac{1}{M^2} \alpha^2 \sum\limits_{i = 1, i \neq k}^K
{|{\hat{\bf g}}^\dag_k[n]{\hat{\bf g}}_i[n]|^2}
 + \left( \frac{E_u}{M^\gamma} \sum\limits_{i = 1}^K {\left(\beta_i - \alpha^2\frac{\tau \frac{E_u}{M^\gamma} \beta_i^2}{1+\tau \frac{E_u}{M^\gamma} \beta_i}\right)}
 + 1 \right)\frac{1}{M^2}||{\hat{\bf g}}_k[n]||^2} \right)} \right\}.
\end{align}

To this end, looking into the asymptotic large antenna regime, i.e., $M \rightarrow \infty$, and invoking the law of large numbers, we get
\begin{align}
 \frac{1}{M}|{\hat{\bf g}}^\dag_k[n]{\hat{\bf g}}_i[n]|^2 -
  \begin{cases}
  \frac{\tau \frac{E_u}{M^\gamma} \beta_k^2}{1+\tau \frac{E_u}{M^\gamma} \beta_k}, & i =k\\
   0, & i\neq k
  \end{cases}
  \overset{M \rightarrow \infty}{\longrightarrow} 0.
\end{align}

Please note, in the above derivation, we have used the fact that $\hat{\bf g}_k[n]$ and  $\hat{\bf g}_i[n]$ $(i \neq k)$ are independent, which can be easily proven according to (\ref{eq:G_hat}). We also have
\begin{align}
  \frac{E_u}{M^\gamma} \sum\limits_{i = 1}^K {\left(\beta_i - \alpha^2\frac{\tau \frac{E_u}{M^\gamma} \beta_i^2}{1+\tau \frac{E_u}{M^\gamma} \beta_i}\right)}
 + 1 \rightarrow 1.
\end{align}

Then, \eqref{eq:R_k_2} simplifies to
\begin{align}
R_k^\text{a,mrc} - \log_2 \left(1+ \frac{\alpha^2 \tau E_u^2 \beta_k^2}{M^{2\gamma-1}} \right) \overset{M \rightarrow \infty}{\longrightarrow} 0,
\end{align}
which completes the proof.
\end{proof}

Theorem \ref{theo:MRC_law} suggests that the asymptotic achievable rate $R_k^\text{a,mrc}$ depends on the choice of $\gamma$. When $\gamma>\frac{1}{2}$, $R_k^\text{a,mrc}$ converges to zero, which indicates that the transmit power of each user has been reduced too much. On the other hand, when $\gamma<\frac{1}{2}$, $R_k^\text{a,mrc}$ grows without bound, which indicates that the transmit power of each user can be scaled down further to maintain the same performance. When $\gamma = \frac{1}{2}$, $R_k^\text{a,mrc}$ converges to a non-zero limit. As such, by setting $\gamma = \frac{1}{2}$, we have the following corollary.

\begin{corollary}\label{coro:MRC_aged_CSI}
 For MRC receivers, with \underline{a}ged CSI, each user can scale down its transmit power at most by $p_u = E_u/\sqrt{M}$ for a fixed $E_u$, and the achievable uplink rate of the \emph{k}-th user becomes
  \begin{align}
    R_k^{\text{a,mrc}} \rightarrow \log_2 \left(1+ {\alpha^2 \tau E_u^2 \beta_k^2} \right), M \rightarrow \infty.
  \end{align}
\end{corollary}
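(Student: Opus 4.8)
The plan is to obtain this corollary as a direct specialization of Theorem~\ref{theo:MRC_law} evaluated at the critical exponent $\gamma = \frac{1}{2}$. First I would set $\gamma = \frac{1}{2}$ in the power scaling $p_u = E_u/M^\gamma$, which is exactly the choice $p_u = E_u/\sqrt{M}$ asserted in the statement. With this substitution, the exponent appearing in the denominator of the logarithm's argument in \eqref{eq:theo:MRC_law} becomes $2\gamma - 1 = 0$, so that $M^{2\gamma-1} = M^0 = 1$. Feeding this into the convergence result of Theorem~\ref{theo:MRC_law} immediately yields
\begin{align}
R_k^{\text{a,mrc}} - \log_2\left(1 + \alpha^2 \tau E_u^2 \beta_k^2\right) \overset{M \rightarrow \infty}{\longrightarrow} 0,
\end{align}
which is equivalent to the claimed limit. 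Since the right-hand side is a finite, strictly positive constant independent of $M$, this establishes that $\gamma = \frac{1}{2}$ produces a nontrivial asymptotic rate.

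To justify the \emph{at most} part of the statement --- namely, that $1/\sqrt{M}$ is the most aggressive admissible power scaling --- I would invoke the trichotomy already spelled out in the discussion following Theorem~\ref{theo:MRC_law}. The general expression $\log_2\bigl(1 + \alpha^2 \tau E_u^2 \beta_k^2 / M^{2\gamma-1}\bigr)$ shows that the limiting behavior is governed entirely by the sign of $2\gamma - 1$: for $\gamma > \frac{1}{2}$ the factor $M^{2\gamma-1} \to \infty$ drives $R_k^{\text{a,mrc}} \to 0$ (the power has been cut too aggressively), while for $\gamma < \frac{1}{2}$ the factor $M^{2\gamma-1} \to 0$ makes $R_k^{\text{a,mrc}} \to \infty$ (the power could be reduced still further while retaining the same rate). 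Hence $\gamma = \frac{1}{2}$ is the unique exponent at which a finite nonzero rate survives, which is precisely the sense in which the transmit power can be scaled down by at most $1/\sqrt{M}$.

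Because the corollary is a pure specialization of the theorem, I do not anticipate any genuine obstacle here; the bulk of the analytical work --- the law-of-large-numbers arguments on $\frac{1}{M}|\hat{\bf g}_k^\dag[n]\hat{\bf g}_i[n]|^2$ and the vanishing of the interference-plus-noise correction term --- has already been carried out in the proof of Theorem~\ref{theo:MRC_law}. The only point requiring care is the interpretation of ``at most,'' which I would make rigorous through the sign-of-$(2\gamma-1)$ argument above rather than as a separate limit computation.
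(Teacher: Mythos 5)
Your proposal is correct and is essentially identical to the paper's own treatment: the corollary is obtained by setting $\gamma = \tfrac{1}{2}$ in Theorem~\ref{theo:MRC_law} so that $M^{2\gamma-1}=1$, and the ``at most'' claim is justified by the same trichotomy on the sign of $2\gamma-1$ that the paper spells out in the paragraph preceding the corollary. No gaps.
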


Corollary \ref{coro:MRC_aged_CSI} suggests a very encouraging result, that taking into account the channel aging effect, the same power scaling law can be achieved as in those scenarios where only channel estimation error is considered \cite{H.Q.Ngo,Q.Zhang}. In other words, channel aging does not affect the power scaling law, it only leads to a reduction of the effective SINR. For the special case $\alpha=1$, i.e., no channel aging effect, the achievable rate for the \emph{k}-th user becomes $\log_2 \left(1+ { \tau E_u^2 \beta_k^2} \right)$, which agrees with the result presented in \cite[Proposition 5]{H.Q.Ngo}.

We now turn our attention to the finite $M$ regime, and present the following tight lower bound on the achievable rate of the \emph{k}-th user.

\begin{theorem}\label{theo:MRC_bound}
  For MRC receivers, with \underline{a}ged CSI, the achievable uplink rate of the \emph{k}-th user is lower bounded by ${R}_k^\text{a,mrc} \geq {\tilde R}_k^\text{a,mrc}$ with
  \begin{align}\label{eq:R_bound_mrc}
    {\tilde R}_k^\text{a,mrc} \triangleq \log_2 \left( 1+ \frac{\alpha^2 \tau p_u^2 (M-1) \beta_k^2}{p_u(1+\tau p_u \beta_k)\sum\limits_{i = 1, i \neq k}^K{\beta_i} + (\tau +1)p_u\beta_k + 1 + b_\text{mrc} } \right),
  \end{align}
  where
  \begin{align}
    b_\text{mrc} \triangleq  (1-\alpha^2)\tau p_u^2\beta_k^2.
  \end{align}
\end{theorem}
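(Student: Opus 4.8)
The plan is to apply Jensen's inequality to the exact rate \eqref{eq:R_k} after specializing it to the MRC detector. Substituting $\hat{\mathbf a}_k[n+1]=\bar{\mathbf g}_k[n+1]=\alpha\hat{\mathbf g}_k[n]$ into \eqref{eq:R_k} and cancelling a common factor $\alpha^2$ from numerator and denominator, the instantaneous SINR inside the logarithm can be written as $1/X_k$ with
\begin{align}
X_k=\frac{\sum_{i=1,i\neq k}^K|\hat{\mathbf g}_k^\dag[n]\hat{\mathbf g}_i[n]|^2}{\|\hat{\mathbf g}_k[n]\|^4}+\frac{c}{p_u\alpha^2\|\hat{\mathbf g}_k[n]\|^2},
\end{align}
where $c\triangleq p_u\sum_{i=1}^K\big(\beta_i-\alpha^2\frac{p_p\beta_i^2}{1+p_p\beta_i}\big)+1$ is the deterministic noise-plus-error constant appearing in \eqref{eq:R_k}. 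Since $x\mapsto\log_2(1+1/x)$ is convex on $(0,\infty)$, Jensen's inequality gives $R_k^{\text{a,mrc}}\geq\log_2\big(1+1/\mathrm E\{X_k\}\big)$, so the whole task reduces to evaluating $\mathrm E\{X_k\}$ in closed form.

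To compute $\mathrm E\{X_k\}$ I would use the statistics of the MMSE estimate: each $\hat{\mathbf g}_k[n]$ has i.i.d.\ $\mathcal{CN}(0,\delta_k)$ entries with $\delta_k\triangleq\frac{p_p\beta_k^2}{1+p_p\beta_k}$, so $\|\hat{\mathbf g}_k[n]\|^2/\delta_k$ is Gamma distributed with shape $M$ and the inverse moment is $\mathrm E\{1/\|\hat{\mathbf g}_k[n]\|^2\}=1/[\delta_k(M-1)]$ for $M>1$. The delicate term is the interference ratio, in which $\hat{\mathbf g}_k[n]$ appears in both numerator and denominator; I would handle it by conditioning on $\hat{\mathbf g}_k[n]$. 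Because the columns $\hat{\mathbf g}_i[n]$, $i\neq k$, are independent of $\hat{\mathbf g}_k[n]$ (a consequence of \eqref{eq:G_hat}), the conditional law of $\hat{\mathbf g}_k^\dag[n]\hat{\mathbf g}_i[n]$ is $\mathcal{CN}(0,\delta_i\|\hat{\mathbf g}_k[n]\|^2)$, whence $\mathrm E\{|\hat{\mathbf g}_k^\dag[n]\hat{\mathbf g}_i[n]|^2\mid\hat{\mathbf g}_k[n]\}=\delta_i\|\hat{\mathbf g}_k[n]\|^2$. Taking the outer expectation reduces $\|\hat{\mathbf g}_k[n]\|^2/\|\hat{\mathbf g}_k[n]\|^4$ to the inverse moment above, giving
\begin{align}
\mathrm E\{X_k\}=\frac{1}{\delta_k(M-1)}\left(\sum_{i=1,i\neq k}^K\delta_i+\frac{c}{p_u\alpha^2}\right).
\end{align}

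The remaining step, which I expect to be the main bookkeeping obstacle rather than a conceptual one, is to simplify this bracket. Substituting $c$ and using $\alpha^2\frac{p_p\beta_i^2}{1+p_p\beta_i}=\alpha^2\delta_i$, the identity $\sum_{i\neq k}\delta_i-\sum_{i=1}^K\delta_i=-\delta_k$ makes every off-diagonal term cancel, leaving $p_u\alpha^2\delta_k(M-1)\mathrm E\{X_k\}=p_u\sum_{i=1}^K\beta_i+1-p_u\alpha^2\delta_k$; hence the bounded SINR equals $p_u\alpha^2\delta_k(M-1)/(p_u\sum_{i}\beta_i+1-p_u\alpha^2\delta_k)$. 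Finally I would insert $\delta_k=\frac{\tau p_u\beta_k^2}{1+\tau p_u\beta_k}$ (recall $p_p=\tau p_u$) and clear denominators by multiplying numerator and denominator by $(1+\tau p_u\beta_k)$: the numerator becomes $\alpha^2\tau p_u^2(M-1)\beta_k^2$, while expanding $(1+\tau p_u\beta_k)(p_u\sum_{i}\beta_i+1)$ and splitting $\sum_{i}\beta_i=\beta_k+\sum_{i\neq k}\beta_i$ regroups the denominator into $p_u(1+\tau p_u\beta_k)\sum_{i\neq k}\beta_i+(\tau+1)p_u\beta_k+1$ together with a residual $\tau p_u^2\beta_k^2$; subtracting the $\alpha^2\tau p_u^2\beta_k^2$ contributed by $-p_u\alpha^2\delta_k$ leaves exactly $b_{\text{mrc}}=(1-\alpha^2)\tau p_u^2\beta_k^2$, reproducing \eqref{eq:R_bound_mrc}. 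The only genuine care needed is the $M>1$ requirement for the inverse moment and the conditioning used for the cross-correlation term; the rest is algebra.
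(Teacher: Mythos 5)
Your proposal is correct and follows essentially the same route as the paper's Appendix~A proof: specialize \eqref{eq:R_k} to the MRC detector, apply Jensen's inequality to the convex map $x\mapsto\log_2(1+1/x)$, evaluate the cross-term via the conditional Gaussianity of $\hat{\bf g}_k^\dag[n]\hat{\bf g}_i[n]$ given $\hat{\bf g}_k[n]$, and use the inverse moment ${\rm E}\{1/\|\hat{\bf g}_k[n]\|^2\}=\frac{1}{M-1}\frac{1+p_p\beta_k}{p_p\beta_k^2}$. The only cosmetic difference is that the paper phrases the cross-term step through the normalized variable ${\tilde g}_i[n]=\hat{\bf g}_k^\dag[n]\hat{\bf g}_i[n]/\|\hat{\bf g}_k[n]\|$ being independent of $\hat{\bf g}_k[n]$, whereas you condition directly; the resulting algebra and final bound are identical.
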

\proof See Appendix \ref{app:theo:MRC_bound}.

It is not difficult to show that ${\tilde R}_k^\text{a,mrc}$ is an increasing function with respect to $\alpha$. Now, since $\alpha$ is related to the severity level of the channel aging effect, Theorem \ref{theo:MRC_bound} actually demonstrates the intuitive result that the more severe channel aging becomes, the lower the achievable rate. If we substitute $p_u = E_u/M^\gamma$ into \eqref{eq:R_bound_mrc} as $M \rightarrow \infty$, and after some simple mathematical manipulations, we get
\begin{align}
 {\tilde R}_k^{\text{a,mrc}} - \log_2 \left(1+ \frac{\alpha^2 \tau E_u^2 \beta_k^2}{M^{2\gamma-1}} \right) \overset{M \rightarrow \infty}{\longrightarrow} 0,
\end{align}
which exactly coincides with the limit obtained from Theorem \ref{theo:MRC_law}, suggesting the asymptotic tightness of the proposed lower bound.
\subsection{ZF Receivers}
We now turn out attention to the ZF receivers for which $M\geq K$, and obtain the following key result:
\begin{theorem}\label{theo:ZF_law}
  For ZF receivers, with \underline{a}ged CSI, if each user scales down its transmit power proportionally to $1/M^\gamma$, i.e., $p_u = E_u/M^\gamma$, where $\gamma >0$ and $E_u$ is fixed, we have
  \begin{align}
 R_k^{\text{a,zf}} - \log_2 \left(1+ \frac{\alpha^2 \tau E_u^2 \beta_k^2}{M^{2\gamma-1}} \right) \overset{M \rightarrow \infty}{\longrightarrow} 0.
  \end{align}
\end{theorem}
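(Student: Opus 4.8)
The plan is to exploit the defining algebraic property of the ZF receiver, namely that $\hat{\bf A}^\dag[n+1]\bar{\bf G}[n+1] = {\bf I}_K$ and hence $\hat{\bf a}_k^\dag[n+1]\bar{\bf g}_i[n+1] = \delta_{ki}$, in order to collapse the generic rate expression \eqref{eq:R_k}. Substituting this identity annihilates the entire inter-user interference sum $\sum_{i\neq k}|\hat{\bf a}_k^\dag[n+1]\bar{\bf g}_i[n+1]|^2$ and pins the desired-signal term to $p_u$, leaving
\begin{align}
R_k^{\text{a,zf}} = {\rm E}\left\{\log_2\left(1 + \frac{p_u}{\|\hat{\bf a}_k[n+1]\|^2\left(p_u\sum\limits_{i=1}^K\left(\beta_i - \alpha^2\frac{p_p\beta_i^2}{1+p_p\beta_i}\right) + 1\right)}\right)\right\}.\nonumber
\end{align}
In contrast to the MRC case, the only random quantity that survives is the receiver norm $\|\hat{\bf a}_k[n+1]\|^2$, and this is precisely what makes the ZF analysis tractable.

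First I would evaluate this norm. Since $\hat{\bf A}^\dag\hat{\bf A} = (\bar{\bf G}^\dag\bar{\bf G})^{-1}$ and $\bar{\bf G}[n+1] = \alpha\hat{\bf G}[n]$, one has $\|\hat{\bf a}_k[n+1]\|^2 = \alpha^{-2}\big[(\hat{\bf G}^\dag[n]\hat{\bf G}[n])^{-1}\big]_{kk}$. Writing $\hat{\bf G}[n] = {\bf U}{\bf \Sigma}^{1/2}$, where ${\bf U}$ is $M\times K$ with i.i.d. $\mathcal{CN}(0,1)$ entries and ${\bf \Sigma}$ is the diagonal matrix of estimated-channel variances $\sigma_k^2 = \frac{p_p\beta_k^2}{1+p_p\beta_k}$, factors the variances out of the Gram matrix and yields $\big[(\hat{\bf G}^\dag\hat{\bf G})^{-1}\big]_{kk} = \sigma_k^{-2}\big[({\bf U}^\dag{\bf U})^{-1}\big]_{kk}$. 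The matrix ${\bf U}^\dag{\bf U}$ is a standard complex Wishart matrix that is completely decoupled from the power scaling, so the law of large numbers (equivalently, the standard identity ${\rm E}\{[({\bf U}^\dag{\bf U})^{-1}]_{kk}\} = 1/(M-K)$) gives $M\big[({\bf U}^\dag{\bf U})^{-1}\big]_{kk} \to 1$ as $M\to\infty$ with $K$ fixed. Hence $\|\hat{\bf a}_k[n+1]\|^2 \to 1/(\alpha^2 M\sigma_k^2)$.

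Finally I would substitute $p_u = E_u/M^\gamma$ and track the dominant terms. Because $p_p = \tau p_u \to 0$, the variance linearises as $\sigma_k^2 \to \tau p_u\beta_k^2 = \tau E_u\beta_k^2/M^\gamma$, and the noise-plus-leakage factor $p_u\sum_i(\cdot) + 1 \to 1$ since $p_u\to 0$. Plugging $\|\hat{\bf a}_k[n+1]\|^2 \to M^{\gamma-1}/(\alpha^2\tau E_u\beta_k^2)$ into the simplified rate then gives $p_u/\|\hat{\bf a}_k[n+1]\|^2 \to \alpha^2\tau E_u^2\beta_k^2/M^{2\gamma-1}$, which is exactly the claimed limit and, reassuringly, coincides with the MRC scaling of Theorem \ref{theo:MRC_law}. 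The hard part will be the passage from the in-probability convergence of the Wishart inverse to the convergence of the expected rate: one must justify moving the limit through both $\log_2(1+\cdot)$ and the expectation ${\rm E}\{\cdot\}$, and verify that the additive noise term dominates the vanishing leakage term in the denominator. These are handled by continuity of the logarithm together with a dominated-convergence / uniform-integrability argument, in the same spirit as the proof of Theorem \ref{theo:MRC_law}; the remaining algebra is routine.
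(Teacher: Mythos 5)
Your proposal is correct and follows essentially the same route as the paper: exploit the ZF identity $\hat{\bf a}_k^\dag[n+1]\bar{\bf g}_i[n+1]=\delta_{ki}$ to reduce the rate to a function of $\big[(\bar{\bf G}^\dag[n+1]\bar{\bf G}[n+1]/M)^{-1}\big]_{kk}$, apply the law of large numbers to show this converges to $\frac{1}{\alpha^2}\frac{1+\tau p_u\beta_k}{\tau p_u\beta_k^2}=1/(\alpha^2\sigma_k^2)$, and then track the scaling under $p_u=E_u/M^\gamma$. Your Wishart factorization $\hat{\bf G}={\bf U}{\bf\Sigma}^{1/2}$ and the explicit remark on interchanging the limit with $\log_2(1+\cdot)$ and ${\rm E}\{\cdot\}$ merely make rigorous the step the paper carries out implicitly, so the two arguments are the same in substance.
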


\begin{proof}With ZF receivers, ${\hat{\bf A}}^\dag[n+1]{\bar{\bf G}}[n+1] = {\bf I}_K$, namely, ${\hat{\bf a}}_k^\dag[n+1]{\bar{\bf g}}_i[n+1] = \delta_{ki}$ where $\delta_{ki} =1$ when $k=i$ and $0$ otherwise. Based on this, substituting $p_u = E_u/M^\gamma$ into \eqref{eq:R_k}, we get
\begin{align}\label{eq:R_k_zf}
  R_k^\text{zf} = {\rm E}\left\{ { \log_2 \left(1+ \frac{M\frac{E_u}{M^\gamma}}{\left( \frac{E_u}{M^\gamma}\sum\limits_{i = 1}^K {\left(\beta_i - \alpha^2\frac{\tau \frac{E_u}{M^\gamma} \beta_i^2}{1+\tau \frac{E_u}{M^\gamma} \beta_i}\right)}
 + 1 \right)\left[\left(\frac{{\bar{\bf G}}^\dag[n+1]{\bar{\bf G}[n+1]}}{M}\right)^{-1}\right]_{k,k}} \right)} \right\}.
\end{align}

To this end, use of the law of large numbers yields
\begin{align}\label{eq:G_bar}
\left[\left(\frac{{\bar{\bf G}}^\dag[n+1]{\bar{\bf G}}[n+1]}{M}\right)^{-1}\right]_{k,k} -
\frac{1}{\alpha^2} \frac{1+\tau \frac{E_u}{M^\gamma} \beta_k} {\tau \frac{E_u}{M^\gamma} \beta_k^2} \overset{M \rightarrow \infty}{\longrightarrow} 0.
\end{align}

Now, by plugging \eqref{eq:G_bar} into \eqref{eq:R_k_zf}, the desired result can be obtained after some simple algebraic manipulations.
\end{proof}

Theorem \ref{theo:ZF_law} indicates that ZF receivers attain the same power scaling law as MRC receivers, i.e., $1/\sqrt{M}$, and achieve the same non-zero limit, which is consistent with prior results reported in \cite{H.Q.Ngo,Q.Zhang}. Hence, it can be concluded that, for ZF receivers with aged CSI, the transmit power of each user can be cut down at most by $1/\sqrt{M}$ with no rate degradation, and the achievable uplink rate is the same as that for MRC receivers.

In the finite $M$ regime, we obtain the following lower bound on the achievable rate.
\begin{theorem}\label{theo:ZF_bound}
   For ZF receivers, with \underline{a}ged CSI, the achievable uplink rate of the \emph{k}-th user is lower bounded by ${R}_k^\text{a,zf} \geq {\tilde R}_k^\text{a,zf}$ with
  \begin{align}\label{eq:R_bound_zf}
    {\tilde R}_k^\text{a,zf} \triangleq \log_2 \left( 1+ \frac{\alpha^2 \tau p_u^2 (M-K)\beta_k^2}{(1 + \tau p_u \beta_k )\sum\limits_{i = 1}^K{\frac{p_u\beta_i}{\tau p_u\beta_i +1}} + \tau p_u\beta_k + 1 + b_\text{zf} } \right),
  \end{align}
  where
  \begin{align}
    b_\text{zf} \triangleq (1-\alpha^2)(1+\tau p_u\beta_k)\sum\limits_{i = 1}^K{\frac{\tau p_u^2\beta_i^2}{1+\tau p_u\beta_i}}.
  \end{align}
\end{theorem}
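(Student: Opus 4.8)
The plan is to collapse the expectation in \eqref{eq:R_k} down to a single scalar random quantity and then bound it with Jensen's inequality. First I would invoke the defining ZF property already used in the proof of Theorem~\ref{theo:ZF_law}, namely $\hat{\bf a}_k^\dag[n+1]\bar{\bf g}_i[n+1]=\delta_{ki}$. This sets the signal term to $|\hat{\bf a}_k^\dag[n+1]\bar{\bf g}_k[n+1]|^2=1$ and annihilates the entire inter-user leakage sum $\sum_{i\neq k}|\hat{\bf a}_k^\dag[n+1]\bar{\bf g}_i[n+1]|^2=0$. Writing $C\triangleq p_u\sum_{i=1}^K\left(\beta_i-\alpha^2\frac{p_p\beta_i^2}{1+p_p\beta_i}\right)+1$ for the deterministic noise-plus-aging coefficient, \eqref{eq:R_k} reduces to
\begin{align}
R_k^\text{a,zf}={\rm E}\left\{\log_2\left(1+\frac{p_u/C}{\|\hat{\bf a}_k[n+1]\|^2}\right)\right\},
\end{align}
so that the only randomness remaining lives in the scalar $\|\hat{\bf a}_k[n+1]\|^2$.

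Next I would produce the lower bound. Since $x\mapsto\log_2(1+c/x)$ is convex on $x>0$ for any $c>0$ (its second derivative equals $\frac{c(2x+c)}{\ln 2\,(x^2+cx)^2}>0$), Jensen's inequality yields
\begin{align}
R_k^\text{a,zf}\geq\log_2\left(1+\frac{p_u}{C\,{\rm E}\{\|\hat{\bf a}_k[n+1]\|^2\}}\right),
\end{align}
which reduces the problem to evaluating ${\rm E}\{\|\hat{\bf a}_k[n+1]\|^2\}$. Because $\hat{\bf A}[n+1]=\bar{\bf G}[n+1]\left(\bar{\bf G}^\dag[n+1]\bar{\bf G}[n+1]\right)^{-1}$, we have $\hat{\bf A}^\dag\hat{\bf A}=(\bar{\bf G}^\dag\bar{\bf G})^{-1}$, hence $\|\hat{\bf a}_k[n+1]\|^2=\left[(\bar{\bf G}^\dag[n+1]\bar{\bf G}[n+1])^{-1}\right]_{kk}$. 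Using $\bar{\bf G}[n+1]=\alpha\hat{\bf G}[n]$ together with the factorisation $\hat{\bf G}[n]=\hat{\bf H}\hat{\bf D}^{1/2}$, where $\hat{\bf H}$ has i.i.d.\ ${\cal CN}(0,1)$ entries and $\hat{\bf D}=\mathrm{diag}(\gamma_1,\ldots,\gamma_K)$ collects the per-user estimate variances $\gamma_i\triangleq\frac{p_p\beta_i^2}{1+p_p\beta_i}$, the diagonal entry factorises cleanly as $\left[(\bar{\bf G}^\dag\bar{\bf G})^{-1}\right]_{kk}=\frac{1}{\alpha^2\gamma_k}\left[(\hat{\bf H}^\dag\hat{\bf H})^{-1}\right]_{kk}$.

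The crux is then the expected diagonal of the inverse of a central complex Wishart matrix: for $\hat{\bf H}^\dag\hat{\bf H}\sim{\cal W}_K(M,{\bf I}_K)$ with $M>K$ one has ${\rm E}\{[(\hat{\bf H}^\dag\hat{\bf H})^{-1}]_{kk}\}=\frac{1}{M-K}$. This is the single step I expect to carry the argument, since it is precisely what converts a matrix expectation into the explicit $(M-K)$ array gain; everything after it is deterministic bookkeeping. Substituting ${\rm E}\{\|\hat{\bf a}_k[n+1]\|^2\}=\frac{1+\tau p_u\beta_k}{\alpha^2(M-K)\tau p_u\beta_k^2}$ (with $p_p=\tau p_u$) turns the Jensen bound into $\log_2\!\left(1+\frac{\alpha^2\tau p_u^2(M-K)\beta_k^2}{C(1+\tau p_u\beta_k)}\right)$. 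The final step is to expand $C(1+\tau p_u\beta_k)$: rewriting $\beta_i-\alpha^2\frac{\tau p_u\beta_i^2}{1+\tau p_u\beta_i}=\frac{\beta_i+(1-\alpha^2)\tau p_u\beta_i^2}{1+\tau p_u\beta_i}$ splits $C$ into an $\alpha$-independent part $\sum_i\frac{p_u\beta_i}{1+\tau p_u\beta_i}+1$ and a $(1-\alpha^2)$-proportional part that, once multiplied through by $(1+\tau p_u\beta_k)$, reproduces exactly $b_\text{zf}$. Collecting terms recovers the denominator of \eqref{eq:R_bound_zf} and completes the proof.
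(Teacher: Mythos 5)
Your proposal is correct and follows essentially the same route as the paper's Appendix~B: exploit the ZF identity $\hat{\bf A}^\dag\bar{\bf G}={\bf I}_K$ to reduce the SINR to $p_u/\bigl(C\,[(\bar{\bf G}^\dag\bar{\bf G})^{-1}]_{kk}\bigr)$, apply Jensen's inequality to the convex map $x\mapsto\log_2(1+c/x)$, and evaluate ${\rm E}\{[(\hat{\bf G}^\dag\hat{\bf G})^{-1}]_{kk}\}$ via the inverse-Wishart mean (the paper's citation of \cite[Lemma 2.10]{A.M.Tulino}). You simply make explicit the steps the paper leaves implicit, and your final algebraic expansion of $C(1+\tau p_u\beta_k)$ correctly reproduces the denominator of \eqref{eq:R_bound_zf} including $b_\text{zf}$.
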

\proof See Appendix \ref{app:theo:ZF_bound}.

By substituting $p_u = E_u/M^\gamma$ into \eqref{eq:R_bound_zf} as $M \rightarrow \infty$, and after some simple algebraic manipulations, we have
\begin{align}
{\tilde R}_k^{\text{zf}} - \log_2 \left(1+ \frac{\alpha^2 \tau E_u^2 \beta_k^2}{M^{2\gamma-1}} \right) \overset{M \rightarrow \infty}{\longrightarrow} 0,
\end{align}
which indicates the asymptotic tightness of the lower bound \eqref{eq:R_bound_zf}.
 \section{Achievable Uplink Rate with Channel Prediction}
As shown in the previous section, the channel aging effect results in a loss at  the achievable rate. To alleviate this undesired implication, channel prediction methods were proposed in \cite{K.T.Truong1,A.K.Papazafeiropoulos1,A.K.Papazafeiropoulos2}. In this section, we investigate the impact of channel prediction on the system performance. More specifically, closed-form lower bounds on the achievable rate are derived for both MRC and ZF receivers. In addition, the power scaling law is also characterized, based on which, the impact of the predictor order on the scaling law is evaluated.

We adopt a very popular linear predictor, i.e., the Wiener predictor proposed in \cite{K.T.Truong1}. Therefore, for the \emph{k}-th user, the channel ${\bf g}_k[n+1]$ is predicted according to ${\bar{\bf y}}_{k}[n]$, where ${\bar {\bf y}}_{k}[n] = \left[{\tilde {\bf y}}^T_{k}[n], {\tilde {\bf y}}^T_{k}[n-1], \ldots, {\tilde {\bf y}}^T_{k}[n-p]\right]^T$ with $p$ being the predictor order. To this end, we need to obtain the optimal \emph{p}-th order linear Wiener predictor, which is given in the following lemma:
 \begin{lemma}\label{lemma:opt:predictor}
   The optimal \emph{p}-th linear Wiener predictor is given by
   \begin{align}
     {\bf q}_k = \alpha \beta_k \left[ \delta \left( p,\alpha\right) \otimes {\bf I}_M \right]{\bf T}_k^{-1}(p,\alpha),
   \end{align}
   where $\delta(p,\alpha) = [1 \ \alpha \ \cdots\  \alpha^p]$, and
   \begin{align}\label{eq:T_k}
     {\bf T}_k(p,\alpha) = \beta_k \left[ \Delta \left( p,\alpha\right) \otimes {\bf I}_M \right] + \frac{1}{p_p}{\bf I}_{M(p+1)}
   \end{align}
   with
   \begin{align}
     \Delta(p,\alpha) \triangleq \left( {\begin{array}{*{20}{c}}
1&\alpha & \cdots &{{\alpha ^p}}\\
\alpha &1& \cdots &{{\alpha ^{p - 1}}}\\
 \vdots & \vdots & \ddots & \vdots \\
{{\alpha ^p}}&{{\alpha ^{p - 1}}}& \cdots &1
\end{array}} \right).
   \end{align}
 \end{lemma}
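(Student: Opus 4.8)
The plan is to recognize this as a standard linear minimum mean-square-error estimation problem and apply the orthogonality principle. Writing the predicted channel as $\mathbf{q}_k \bar{\mathbf{y}}_k[n]$, the optimal $\mathbf{q}_k$ minimizing $\mathrm{E}\{\|\mathbf{g}_k[n+1] - \mathbf{q}_k \bar{\mathbf{y}}_k[n]\|^2\}$ is characterized by the Wiener--Hopf equation
\begin{align}
\mathbf{q}_k = \mathbf{R}_{g\bar{y}}\,\mathbf{R}_{\bar{y}\bar{y}}^{-1},
\end{align}
where $\mathbf{R}_{g\bar{y}} \triangleq \mathrm{E}\{\mathbf{g}_k[n+1]\bar{\mathbf{y}}_k^\dag[n]\}$ and $\mathbf{R}_{\bar{y}\bar{y}} \triangleq \mathrm{E}\{\bar{\mathbf{y}}_k[n]\bar{\mathbf{y}}_k^\dag[n]\}$. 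It then suffices to evaluate these two correlation matrices and identify them with $\alpha\beta_k[\delta(p,\alpha)\otimes\mathbf{I}_M]$ and $\mathbf{T}_k(p,\alpha)$, respectively.

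The first ingredient is the temporal autocorrelation of the channel. Iterating the first-order autoregressive model \eqref{eq:g_chan_aging} yields $\mathbf{g}_k[n+m] = \alpha^m \mathbf{g}_k[n] + \sum_{l=1}^{m}\alpha^{m-l}\mathbf{e}_k[n+l]$ for $m\geq 0$; since each innovation $\mathbf{e}_k$ is independent of the past channel and $\mathbf{g}_k[n]$ has covariance $\beta_k\mathbf{I}_M$, this gives $\mathrm{E}\{\mathbf{g}_k[n+m]\mathbf{g}_k^\dag[n]\} = \alpha^{|m|}\beta_k\mathbf{I}_M$, where the absolute value together with Hermitian symmetry covers $m<0$ and I use that $\alpha = J_0(2\pi f_D T_s)$ is real. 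Because the $(j+1)$-th block of $\bar{\mathbf{y}}_k[n]$ is $\tilde{\mathbf{y}}_k[n-j] = \mathbf{g}_k[n-j] + \frac{1}{\sqrt{p_p}}\mathbf{b}_k[n-j]$ and the pilot noise $\mathbf{b}_k$ is independent of the channel, the $(j+1)$-th $M\times M$ block of $\mathbf{R}_{g\bar{y}}$ equals $\mathrm{E}\{\mathbf{g}_k[n+1]\mathbf{g}_k^\dag[n-j]\} = \alpha^{j+1}\beta_k\mathbf{I}_M$. Collecting these blocks produces exactly $\mathbf{R}_{g\bar{y}} = \alpha\beta_k[\delta(p,\alpha)\otimes\mathbf{I}_M]$.

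For the observation autocorrelation, the $(i,j)$-th block is $\mathrm{E}\{\tilde{\mathbf{y}}_k[n-i]\tilde{\mathbf{y}}_k^\dag[n-j]\}$. Expanding the product and using independence of channel and noise, together with the temporal whiteness of $\mathbf{b}_k$ (so that $\mathrm{E}\{\mathbf{b}_k[n-i]\mathbf{b}_k^\dag[n-j]\} = \delta_{ij}\mathbf{I}_M$), this block evaluates to $\alpha^{|i-j|}\beta_k\mathbf{I}_M + \frac{1}{p_p}\delta_{ij}\mathbf{I}_M$. Assembling over $i,j\in\{0,\ldots,p\}$ yields the block-Toeplitz matrix $\beta_k[\Delta(p,\alpha)\otimes\mathbf{I}_M] + \frac{1}{p_p}\mathbf{I}_{M(p+1)} = \mathbf{T}_k(p,\alpha)$, matching \eqref{eq:T_k}. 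Substituting the two correlation matrices into the Wiener--Hopf equation then delivers the claimed expression for $\mathbf{q}_k$.

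The computation is essentially bookkeeping, so the main obstacle is conceptual rather than technical: verifying that both correlation matrices factor cleanly as a Kronecker product of a $(p+1)\times(p+1)$ temporal matrix with $\mathbf{I}_M$. This separation hinges on the antenna entries being i.i.d. and spatially uncorrelated, so that every block is a scalar multiple of $\mathbf{I}_M$. The key check is that the $\alpha^{|i-j|}$ entries reproduce the \emph{symmetric} Toeplitz matrix $\Delta(p,\alpha)$ precisely; it is the reality of $\alpha$ that makes $\Delta$ symmetric rather than merely Hermitian, and any sign or conjugation slip here would break the identification with $\mathbf{T}_k(p,\alpha)$.
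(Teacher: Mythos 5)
Your derivation is correct and follows the same route the paper intends: the paper's proof simply defers to Theorem 1 of the cited reference \cite{K.T.Truong1}, which is precisely the standard Wiener--Hopf/orthogonality-principle argument you carry out, with the cross- and auto-correlation blocks evaluating to $\alpha^{j+1}\beta_k{\bf I}_M$ and $\alpha^{|i-j|}\beta_k{\bf I}_M+\frac{1}{p_p}\delta_{ij}{\bf I}_M$ exactly as you compute. Your explicit verification of the Kronecker/block-Toeplitz structure supplies the bookkeeping the paper omits.
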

 \proof Following similar steps as in the proof of Theorem 1 in \cite{K.T.Truong1}, we can obtain the desired result. 
  \endproof

Having characterized the optimal predictor, the predicted channel can then be obtained as
 \begin{align}\label{eq:g_breve}
 {\bar{{\bf g}}}_k[n+1] = {\breve{{\bf g}}}_k[n+1] = {\bf q}_k{\bar{\bf y}}_{p,k}[n].
 \end{align}

Furthermore, the resulting mean square error between the predicted channel ${\breve{\bf g}}_k[n+1]$ and the true channel ${\bf g}_k[n+1]$ can be calculated as
\begin{align}
  \epsilon_p &= {\rm E}\{||{\bf g}_k[n+1] - {\breve{\bf g}}_k[n+1]||^2\}\\
  & \overset{(a)}{=} {\text{tr}}({\rm E}\{ ({\bf g}_k[n+1] - {\bf q}_k{\bar{\bf y}}_{p,k}[n] ){\bf g}^\dag_k[n+1]\})\\
  &= {\text{tr}}(\beta_k{\bf I}_M - \alpha^2{\bf\Theta}_k(p,\alpha)),
\end{align}
where in $(a)$ we applied the orthogonality of ${\bf g}_k[n+1] - {\breve{\bf g}}_k[n+1]$ and ${\breve{\bf g}}_k[n+1]$, and
\begin{align}\label{eq:Theta_k}
  {\bf\Theta}_k(p,\alpha) \triangleq \beta_k^2 \left[ \delta \left( p,\alpha\right) \otimes {\bf I}_M \right]{\bf T}_k^{-1}(p,\alpha) \left[ \delta \left( p,\alpha\right) \otimes {\bf I}_M \right].
\end{align}

Hence, the covariance matrix of ${\breve{\bf g}}_k[n+1]$ is given by $\alpha^2{\bf\Theta}_k(p,\alpha)$. Finally, the true channel can be decomposed as
\begin{align}
  {\bf g}_k[n+1] = {\breve{\bf g}}_k[n+1] + {\check{\bf e}}_k[n+1],
\end{align}
where ${\check{\bf e}}_k[n+1]$ is the channel prediction error vector with covariance matrix  $\beta_k{\bf I}_M - \alpha^2{\bf\Theta}_k(p,\alpha)$, which is independent of ${\breve{\bf g}}_k[n+1]$.


Before proceeding, we find it crucial to first characterize the structure of the matrix ${\bf\Theta}_k(p,\alpha)$ by the following important observation:
\begin{lemma}\label{lemma:1}
  ${\bf\Theta}_k(p,\alpha)$ is a scaled identity matrix of size $M \times M$.
\end{lemma}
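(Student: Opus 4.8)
The plan is to exploit the Kronecker-product structure of every matrix entering the definition \eqref{eq:Theta_k} of ${\bf\Theta}_k(p,\alpha)$, so that the whole triple product collapses to a scalar multiple of ${\bf I}_M$. The first step is to rewrite ${\bf T}_k(p,\alpha)$ from \eqref{eq:T_k} as a \emph{single} Kronecker product. Since ${\bf I}_{M(p+1)} = {\bf I}_{p+1}\otimes{\bf I}_M$, the two summands share the common right factor ${\bf I}_M$, so that
\[
{\bf T}_k(p,\alpha) = \left[\left(\beta_k\,\Delta(p,\alpha) + \tfrac{1}{p_p}{\bf I}_{p+1}\right)\otimes{\bf I}_M\right] \triangleq {\bf M}_k\otimes{\bf I}_M,
\]
where ${\bf M}_k \triangleq \beta_k\,\Delta(p,\alpha) + \tfrac{1}{p_p}{\bf I}_{p+1}$ is a $(p+1)\times(p+1)$ matrix acting only on the temporal index. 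Note ${\bf M}_k$ is invertible, since $\Delta(p,\alpha)$ is a positive semidefinite correlation matrix (its entries are $\alpha^{|i-j|}$) and $\tfrac{1}{p_p}{\bf I}_{p+1}\succ 0$.

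Next I would invert using the Kronecker identity $({\bf A}\otimes{\bf B})^{-1} = {\bf A}^{-1}\otimes{\bf B}^{-1}$, which gives ${\bf T}_k^{-1}(p,\alpha) = {\bf M}_k^{-1}\otimes{\bf I}_M$. Then, writing $\delta\triangleq\delta(p,\alpha)$ and applying the mixed-product rule $({\bf A}\otimes{\bf B})({\bf C}\otimes{\bf D}) = ({\bf A}{\bf C})\otimes({\bf B}{\bf D})$ twice to the triple product in \eqref{eq:Theta_k}, I obtain
\[
{\bf\Theta}_k(p,\alpha) = \beta_k^2\left(\delta\,{\bf M}_k^{-1}\,\delta^\dag\right)\otimes\left({\bf I}_M{\bf I}_M{\bf I}_M\right).
\]
Because $\delta$ is $1\times(p+1)$ and ${\bf M}_k^{-1}$ is $(p+1)\times(p+1)$, the quantity $c_k\triangleq\beta_k^2\,\delta\,{\bf M}_k^{-1}\,\delta^\dag$ is a \emph{scalar}, while ${\bf I}_M{\bf I}_M{\bf I}_M = {\bf I}_M$. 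A scalar Kronecker-multiplied by ${\bf I}_M$ is simply $c_k{\bf I}_M$, hence ${\bf\Theta}_k(p,\alpha) = c_k\,{\bf I}_M$, which is precisely the claim.

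There is no genuine analytical obstacle here: the argument is entirely a bookkeeping exercise in Kronecker algebra. The one point that requires care is the dimensional consistency of the triple product. As written in \eqref{eq:Theta_k}, the outermost factor $[\delta(p,\alpha)\otimes{\bf I}_M]$ is $M\times M(p+1)$, so for the product to be a well-defined $M\times M$ matrix it must be read as the conjugate transpose $[\delta(p,\alpha)\otimes{\bf I}_M]^\dag = \delta^\dag\otimes{\bf I}_M$, which is the interpretation I used above. Since every entry of $\delta(p,\alpha)$ and $\Delta(p,\alpha)$ is real (recall $\alpha = J_0(2\pi f_D T_s)\in[-1,1]$), conjugation acts trivially and the resulting scale factor $c_k$ is real and nonnegative, consistent with ${\bf\Theta}_k(p,\alpha)$ being a covariance-type matrix.
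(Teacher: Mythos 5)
Your proof is correct and follows essentially the same route as the paper's: both identify ${\bf T}_k(p,\alpha)$ as a matrix of the form ${\bf A}\otimes{\bf I}_M$, invert it via the Kronecker inversion property, and collapse the triple product in \eqref{eq:Theta_k} to a scalar times ${\bf I}_M$ (the paper writes the scalar as an explicit double sum over the entries of ${\bf A}^{-1}$, you write it as the quadratic form $\delta\,{\bf M}_k^{-1}\,\delta^\dag$, which is the same quantity). Your added remarks on the invertibility of ${\bf M}_k$ and on reading the last factor of \eqref{eq:Theta_k} as a conjugate transpose for dimensional consistency are minor but valid refinements of the paper's argument.
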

\proof
Notice that ${\bf T}_k(p,\alpha)$ has the following structure
\begin{align}
  {\bf T}_k(p,\alpha) = {\bf A} \otimes {\bf I}_M,
\end{align}
where $\bf A$ is an invertible matrix, whose entries are denoted by $a_{ij}$ with $1 \leq i, j \leq n$.

Using the matrix inversion property of Kronecker product \cite[Eq. (1.10.8)]{X.Zhang}, we get
\begin{align}\label{eq:kron_property}
  ({\bf A} \otimes {\bf I}_M)^{-1} = {\bf A}^{-1} \otimes {\bf I}_M.
\end{align}

Now, let us define ${\bf B} = {\bf A}^{-1}$, where the $i,j$th element of $\bf B$ is expressed as $b_{ij}$ with $1 \leq i, j \leq n$. Hence, by combining \eqref{eq:kron_property} and \eqref{eq:Theta_k}, we obtain
\begin{align}
  {\bf\Theta}_k(p,\alpha) = \beta_k^2 \sum\limits_{i = 1}^{p+1} \sum\limits_{j = 1}^{p+1}  \alpha^{2(i-1)}b_{ij} {\bf I}_M,
\end{align}
which concludes the proof.
\endproof

Equipped with Lemma \ref{lemma:1}, it can be straightforwardly shown that the variances of the elements of ${\breve{\bf g}}_k[n+1]$ and ${\check{\bf e}}_k[n+1]$ are $\frac{1}{M}{\text {tr}}\left( \alpha^2 {\bf \Theta}_k(p,\alpha)\right)$ and $\frac{1}{M}{\text {tr}}\left( \beta_k{\bf I}_M - \alpha^2 {\bf \Theta}_k(p,\alpha)\right)$, respectively.

With predicted CSI, if we substitute ${\bar{\bf g}}_k[n+1] = {\breve{\bf g}}_k[n+1]$ into \eqref{eq:r}, we can obtain the following achievable uplink rate of the \emph{k}-th user
\begin{align}\label{eq:R_p_k}
 R^{\text p}_k = {\rm E}\left\{ { \log_2 \left(1+ {\text{SINR}}_k^{\text p} \right)} \right\},
\end{align}
where the superscript p is used to denote the \underline{p}redicted CSI, and ${\text{SINR}}_k^{\text p}$ is the signal-to-interference-noise (SINR), which is given by
\begin{align}\label{eq:SINR_p_k}
{\text{SINR}}_k^{\text p}  = \frac{p_u|{\hat{\bf a}}_k^\dag[n+1]{\breve{\bf g}}_k[n+1]|^2}{p_u \sum\limits_{i = 1, i \neq k}^K {|{\hat{\bf a}}_k^\dag[n+1]{\breve{\bf g}}_i[n+1]|^2}
 + ||{\hat{\bf a}}_k[n+1]||^2 \left( p_u \sum\limits_{i = 1}^K {\frac{1}{M}{\text {tr}}\left( \beta_i{\bf I}_M - \alpha^2 {\bf \Theta}_i(p,\alpha)\right)} + 1 \right)},
\end{align}
and the expectation in \eqref{eq:R_p_k} is taken over small-scale fading.

\subsection{MRC Receivers}
\begin{theorem}\label{theo:MRC_law_pre}
   For MRC receivers, with \underline{p}redicted CSI, if each user scales down its transmit power proportionally to $1/M^\gamma$, i.e., $p_u = E_u/M^\gamma$, where $\gamma >0$ and $E_u$ is fixed, we have
  \begin{align}\label{eq:MRC:law:pre}
R_k^{\text{p,mrc}} - \log_2 \left(1+ \frac{\alpha^2 \sum\limits_{j = 0}^p{\alpha^{2j}} \tau E_u^2 \beta_k^2}{M^{2\gamma-1}} \right)\overset{M \rightarrow \infty}{\longrightarrow} 0.
  \end{align}
\end{theorem}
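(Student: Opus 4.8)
The plan is to follow the blueprint of the proof of Theorem \ref{theo:MRC_law}, with the aged estimate $\alpha\hat{\bf g}_k[n]$ replaced by the predicted channel $\breve{\bf g}_k[n+1]$, and to track how the predictor order $p$ enters through the element variance of $\breve{\bf g}_k[n+1]$. First I would specialise the SINR in \eqref{eq:SINR_p_k} to the MRC choice $\hat{\bf a}_k[n+1]=\breve{\bf g}_k[n+1]$, so that the numerator becomes $p_u\|\breve{\bf g}_k[n+1]\|^4$ and the denominator becomes $p_u\sum_{i\neq k}|\breve{\bf g}_k^\dag[n+1]\breve{\bf g}_i[n+1]|^2+\|\breve{\bf g}_k[n+1]\|^2\left(p_u\sum_{i}\left(\beta_i-\alpha^2\theta_i\right)+1\right)$, where I abbreviate $\theta_i\triangleq\frac1M\text{tr}\,{\bf\Theta}_i(p,\alpha)$. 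After inserting $p_u=E_u/M^\gamma$ I would divide numerator and denominator by $M^2$, exactly as in the passage from \eqref{eq:R_k} to \eqref{eq:R_k_2}.

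Next I would apply the law of large numbers, relying on Lemma \ref{lemma:1}: because ${\bf\Theta}_k(p,\alpha)$ is a scaled identity, the entries of $\breve{\bf g}_k[n+1]$ are i.i.d. with common variance $\alpha^2\theta_k$, whence $\frac1M\|\breve{\bf g}_k[n+1]\|^2\to\alpha^2\theta_k$ and, by the mutual independence of $\breve{\bf g}_k[n+1]$ and $\breve{\bf g}_i[n+1]$ for $i\neq k$, $\frac1M\breve{\bf g}_k^\dag[n+1]\breve{\bf g}_i[n+1]\to0$. Consequently the scaled numerator tends to $p_u(\alpha^2\theta_k)^2$, the scaled interference $p_u\sum_{i\neq k}\left|\frac1M\breve{\bf g}_k^\dag[n+1]\breve{\bf g}_i[n+1]\right|^2$ vanishes, the residual factor $p_u\sum_i(\beta_i-\alpha^2\theta_i)+1\to1$, and the scaled noise term reduces to $\frac1M\alpha^2\theta_k$. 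A short order-of-magnitude check (using $\theta_i=O(p_u)$, established below) then shows the interference is $O(M^{-1-3\gamma})$ while the noise is $O(M^{-1-\gamma})$, so with $\gamma>0$ the interference is asymptotically dominated and $\text{SINR}_k^{\text p}\to p_uM\alpha^2\theta_k$.

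The crux of the argument, and the step I expect to be the main obstacle, is to pin down the leading-order behaviour of $\theta_k=\frac1M\text{tr}\,{\bf\Theta}_k(p,\alpha)$ in the regime $p_p=\tau p_u=\tau E_u/M^\gamma\to0$, since this is precisely where the geometric factor $\sum_{j=0}^p\alpha^{2j}$ is born. Using the Kronecker structure ${\bf T}_k(p,\alpha)={\bf A}\otimes{\bf I}_M$ with ${\bf A}=\beta_k\Delta(p,\alpha)+\frac1{p_p}{\bf I}_{p+1}$ from \eqref{eq:T_k}, I would write ${\bf A}=\frac1{p_p}\left({\bf I}_{p+1}+p_p\beta_k\Delta(p,\alpha)\right)$ and expand ${\bf A}^{-1}=p_p\left({\bf I}_{p+1}-p_p\beta_k\Delta(p,\alpha)+O(p_p^2)\right)$; substituting this into \eqref{eq:Theta_k} and taking the trace as in the proof of Lemma \ref{lemma:1} yields $\theta_k=\beta_k^2 p_p\sum_{j=0}^p\alpha^{2j}+O(p_p^2)=\beta_k^2\tau p_u\sum_{j=0}^p\alpha^{2j}+O(p_u^2)$, where only the diagonal of the leading term survives because $b_{ij}\to p_p\delta_{ij}$.

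Finally I would combine the two limits, obtaining $\text{SINR}_k^{\text p}\to p_uM\cdot\alpha^2\beta_k^2\tau p_u\sum_{j=0}^p\alpha^{2j}=\frac{\alpha^2\sum_{j=0}^p\alpha^{2j}\,\tau E_u^2\beta_k^2}{M^{2\gamma-1}}$, and invoke the continuity of $\log_2(1+\cdot)$ together with the (bounded) convergence needed to move the limit inside the expectation in \eqref{eq:R_p_k}, which delivers \eqref{eq:MRC:law:pre}. As a sanity check, setting $p=0$ collapses $\sum_{j=0}^p\alpha^{2j}$ to $1$ and reproduces Theorem \ref{theo:MRC_law}, confirming that the predicted-CSI scaling law is the aged-CSI one amplified by the predictor gain $\sum_{j=0}^p\alpha^{2j}$.
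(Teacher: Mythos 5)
Your proposal is correct and follows essentially the same route as the paper: specialize the SINR in \eqref{eq:SINR_p_k} to MRC, apply the law of large numbers via the scaled-identity structure of ${\bf\Theta}_k(p,\alpha)$ from Lemma \ref{lemma:1}, and extract the factor $\sum_{j=0}^p\alpha^{2j}$ from the small-$p_p$ asymptotics of ${\bf T}_k^{-1}(p,\alpha)$. Your Neumann-series expansion of ${\bf A}^{-1}$ and the explicit order-of-magnitude comparison showing the interference term is dominated by the noise term are slightly more careful justifications of steps the paper treats informally (``the off-diagonal elements of ${\bf T}_k(p,\alpha)$ become negligible''), but the argument is the same.
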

\proof
By substituting ${\hat{\bf a}}_k[n+1] = {\breve{\bf g}}_k[n+1]$ and $p_u = E_u/M^\gamma$ into \eqref{eq:SINR_p_k}, and after some simple manipulations, the SINR ${\text{SINR}}_k^{\text p}$ can be expressed as
\begin{align}\label{eq:R_p_k_mrc}
 {\text{SINR}}_k^{\text p} =  \frac{\frac{E_u}{M^\gamma}\frac{1}{M^2}||{\breve{\bf g}}_k[n+1]||^4}{\frac{E_u}{M^\gamma} \sum\limits_{i = 1, i \neq k}^K {\frac{1}{M^2}|{\breve{\bf g}}^\dag_k[n+1]{\breve{\bf g}}_i[n+1]|^2}
 + \frac{1}{M^2}||{\breve{\bf g}}_k[n+1]||^2 \left( \frac{E_u}{M^\gamma} \sum\limits_{i = 1}^K {\frac{1}{M}{\text {tr}}\left( \beta_i{\bf I}_M - \alpha^2 {\bf \Theta}_i(p,\alpha)\right)}
 + 1 \right)}.
\end{align}

Since ${\breve{\bf g}}^\dag_k[n+1]$ and ${\breve{\bf g}}_i[n+1]$ $(i \neq k)$ are independent from \eqref{eq:g_breve}, we invoke the law of large numbers, when $M \rightarrow \infty$, and clearly obtain
\begin{align}\label{eq:mrc_g_limit}
  \frac{1}{M} {\breve{\bf g}}^\dag_k[n+1]{\breve{\bf g}}_i[n+1] -
  \begin{cases}
   \frac{1}{M}{\text {tr}}\left( \alpha^2 {\bf \Theta}_k(p,\alpha)\right), & i =k\\
   0, & i\neq k
  \end{cases}
  \overset{M \rightarrow \infty}{\longrightarrow} 0.
\end{align}

Hence, the remaining task is to compute ${\text {tr}}\left( \alpha^2 {\bf \Theta}_k(p,\alpha)\right)$. To do this, we recall that ${\bf T}_k(p,\alpha)$ in \eqref{eq:T_k} can be expressed as
\begin{align}
  {\bf T}_k(p,\alpha) = \left( {\begin{array}{*{20}{c}}
\left(\beta_k + \frac{M^\gamma}{\tau E_u} \right) {\bf I}_M &\alpha \beta_k {\bf I}_M & \cdots &{{\alpha ^p}} \beta_k {\bf I}_M\\
\alpha \beta_k {\bf I}_M &\left(\beta_k + \frac{M^\gamma}{\tau E_u} \right) {\bf I}_M& \cdots &{{\alpha ^{p - 1}}} \beta_k {\bf I}_M\\
 \vdots & \vdots & \ddots & \vdots \\
{{\alpha ^p}} \beta_k {\bf I}_M&{{\alpha ^{p - 1}}} \beta_k {\bf I}_M& \cdots &\left(\beta_k + \frac{M^\gamma}{\tau E_u} \right) {\bf I}_M
\end{array}} \right).
\end{align}

A close observation shows that, when $M \rightarrow \infty$, the off-diagonal elements of ${\bf T}_k(p,\alpha)$ become negligible due to the fact that $\frac{M^\gamma}{\tau E_u} \gg \alpha^i \beta_k$ $(i = 1, 2, \ldots, p)$. As such, the inverse of ${\bf T}_k(p,\alpha)$ can be accurately approximated by
\begin{align}
 {\bf T}^{-1}_k(p,\alpha) - \left(\frac{M^\gamma}{\tau E_u}\right)^{-1} {\bf I}_{M(p+1)} \overset{M \rightarrow \infty}{\longrightarrow} 0.
\end{align}

Hence, we have
\begin{align}\label{eq:limit_theta}
{\bf \Theta}_k(p,\alpha) - \beta_k^2 \frac{\tau E_u}{M^\gamma} \sum\limits_{j = 0}^p{\alpha^{2j}} {\bf I}_{M} \overset{M \rightarrow \infty}{\longrightarrow} 0.
\end{align}

As a result, \eqref{eq:mrc_g_limit} can be further simplified to
\begin{align}\label{eq:mrc_g_limit_simple}
 \frac{1}{M} {\breve{\bf g}}^\dag_k[n+1]{\breve{\bf g}}_i[n+1] -
  \begin{cases}
   \alpha^2 \beta_k^2 \frac{\tau E_u}{M^\gamma} \sum\limits_{j = 0}^p{\alpha^{2j}}, & i =k\\
   0, & i\neq k
  \end{cases}
  \overset{M \rightarrow \infty}{\longrightarrow} 0.
\end{align}
and
\begin{align}\label{eq:limit_theta_simple}
  \frac{E_u}{M^\gamma} \sum\limits_{i = 1}^K {\frac{1}{M}{\text {tr}}\left( \beta_i{\bf I}_M - \alpha^2 {\bf \Theta}_i(p,\alpha)\right)}
 + 1 =   \frac{E_u}{M^\gamma} \sum\limits_{i = 1}^K \left( \beta_i - \alpha^2 \beta_i^2 \frac{\tau E_u}{M^\gamma} \sum\limits_{j = 0}^p{\alpha^{2j}} \right)
 + 1 \rightarrow 1.
\end{align}

To this end, substitution of \eqref{eq:mrc_g_limit_simple} and \eqref{eq:limit_theta_simple} into \eqref{eq:R_p_k_mrc}, and then combination with \eqref{eq:R_p_k} concludes the proof.
\endproof

Compared to Theorem \ref{theo:MRC_law}, Theorem \ref{theo:MRC_law_pre} indicates that the channel prediction does not alter the power scaling law. Hence, without degradation of the achievable rate, the transmit power of each user can be cut down at most by $1/\sqrt{M}$. As such, setting $\gamma = \frac{1}{2}$, we have the following result.

\begin{corollary}\label{coro:MRC_predicted_CSI}
For MRC receivers, with \underline{p}redicted CSI, each user can scale down its transmit power at most by $p_u = E_u/\sqrt{M}$ for a fixed $E_u$, and the achievable uplink rate of the \emph{k}-th user becomes
  \begin{align}
    R_k^{\text{p,mrc}} \rightarrow \log_2 \left(1+ {\alpha^2 \sum\limits_{j = 0}^p{\alpha^{2j}} \tau E_u^2 \beta_k^2} \right), M \rightarrow \infty.
  \end{align}
\end{corollary}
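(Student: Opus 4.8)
The plan is to obtain this corollary as the immediate specialization of Theorem \ref{theo:MRC_law_pre} to the critical exponent $\gamma = \frac{1}{2}$, exactly mirroring the way Corollary \ref{coro:MRC_aged_CSI} was extracted from Theorem \ref{theo:MRC_law}. The essential observation is that the asymptotic characterization proved in Theorem \ref{theo:MRC_law_pre},
\begin{align}
R_k^{\text{p,mrc}} - \log_2 \left(1+ \frac{\alpha^2 \sum\limits_{j = 0}^p{\alpha^{2j}} \tau E_u^2 \beta_k^2}{M^{2\gamma-1}} \right)\overset{M \rightarrow \infty}{\longrightarrow} 0,
\end{align}
retains its dependence on $M$ solely through the factor $M^{2\gamma-1}$ in the denominator of the logarithm's argument, since all the genuinely difficult limits (the law-of-large-numbers convergences in \eqref{eq:mrc_g_limit} and the asymptotic diagonalization of ${\bf T}_k(p,\alpha)$ leading to \eqref{eq:limit_theta}) have already been carried out. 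The whole argument therefore reduces to examining how this exponent behaves as a function of $\gamma$.

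First I would establish the trichotomy that pins $\gamma = \frac{1}{2}$ as the sharp threshold, thereby justifying the ``at most $1/\sqrt{M}$'' claim. When $\gamma > \frac{1}{2}$, the exponent $2\gamma-1$ is positive, so $M^{2\gamma-1} \to \infty$, the argument of the logarithm tends to $1$, and $R_k^{\text{p,mrc}} \to 0$: the power has been scaled down too aggressively. When $\gamma < \frac{1}{2}$, the exponent is negative, so $M^{2\gamma-1} \to 0$ and $R_k^{\text{p,mrc}}$ grows without bound, indicating that more reduction is still affordable. Only at $\gamma = \frac{1}{2}$ is $M^{2\gamma-1} = M^0 = 1$ for every $M$, so the argument of the logarithm is a finite, strictly positive constant; this is precisely the boundary case defining the maximal admissible scaling.

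Finally, I would substitute $\gamma = \frac{1}{2}$ into the limit of Theorem \ref{theo:MRC_law_pre}. Because $M^{2\gamma-1} = 1$ independently of $M$, the subtracted term is the constant $\log_2(1 + \alpha^2 \sum_{j=0}^p \alpha^{2j} \tau E_u^2 \beta_k^2)$, and the convergence of the difference to zero yields at once
\begin{align}
R_k^{\text{p,mrc}} \rightarrow \log_2 \left(1+ {\alpha^2 \sum\limits_{j = 0}^p{\alpha^{2j}} \tau E_u^2 \beta_k^2} \right), \quad M \rightarrow \infty,
\end{align}
as claimed. I expect no real obstacle here: the corollary is a pure evaluation of an already-proved asymptotic at the critical exponent, and the only point requiring any care is the trichotomy bookkeeping that certifies $\gamma = \frac{1}{2}$ as optimal rather than merely feasible.
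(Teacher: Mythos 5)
Your proposal is correct and follows exactly the paper's own route: the corollary is obtained by specializing Theorem \ref{theo:MRC_law_pre} to $\gamma = \tfrac{1}{2}$, with the same trichotomy in $\gamma$ (rate vanishing for $\gamma > \tfrac{1}{2}$, unbounded growth for $\gamma < \tfrac{1}{2}$) justifying the ``at most $1/\sqrt{M}$'' claim. No gaps; this matches the paper's argument.
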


Corollary \ref{coro:MRC_predicted_CSI} indicates that, although channel prediction does not affect the power scaling law, it does increase the achievable rate by contributing to the enhancement of the effective SINR due to the more accurate CSI being obtained compared to the system without channel prediction. Moreover, the higher the prediction order $p$, the larger the achievable rate gain. However, it is also worth pointing out that the processing complexity increases substantially when the prediction order $p$ becomes large. As such, one should carefully balance this during the system design. For sufficiently large $p$, the achievable rate $R_k^{\text{p,mrc}}$ converges to $\log_2 \left(1+  \frac{\alpha^2}{1 - \alpha^2} \tau E_u^2 \beta_k^2 \right)$, which indicates that the rate gain due to channel prediction is most pronounced for large $\alpha$ and becomes negligible for small $\alpha$. This is rather intuitive, since large $\alpha$ implies relatively slow change of the channel, as such, the channel prediction becomes more accurate.

We now concentrate on the finite $M$ regime, and present the following tight lower bound on the achievable rate of the \emph{k}-th user.
\begin{theorem}\label{theo:MRC_bound_pre}
  For MRC receivers, with \underline{p}redicted CSI, the achievable uplink rate of the \emph{k}-th user is lower bounded by ${R}_k^\text{p,mrc} \geq {\tilde R}_k^\text{p,mrc}$ with
  \begin{align}\label{eq:R_bound_mrc_pre}
    {\tilde R}_k^\text{p,mrc} \triangleq \log_2\left( 1+ \frac{p_u(M-1)\frac{1}{M}{\text{tr}}(\alpha^2{\bf \Theta}_k(p,\alpha))}{p_u\sum\limits_{i = 1, i\neq k}^K{\frac{1}{M}{\text {tr}}\left( \alpha^2 {\bf \Theta}_i(p,\alpha)\right)} + p_u\sum\limits_{i = 1}^K{\frac{1}{M}{\text {tr}}\left( \beta_i{\bf I}_M - \alpha^2 {\bf \Theta}_i(p,\alpha)\right)} + 1 } \right).
  \end{align}
\end{theorem}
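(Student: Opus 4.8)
The plan is to begin from the per-user rate $R_k^\text{p,mrc} = {\rm E}\{\log_2(1+{\text{SINR}}_k^{\text p})\}$ with the MRC choice ${\hat{\bf a}}_k[n+1] = {\breve{\bf g}}_k[n+1]$ already substituted into \eqref{eq:SINR_p_k}, i.e. the form \eqref{eq:R_p_k_mrc}. First I would divide both the numerator and the denominator of ${\text{SINR}}_k^{\text p}$ by $||{\breve{\bf g}}_k[n+1]||^2$, so that the rate takes the shape $R_k^\text{p,mrc} = {\rm E}\{\log_2(1+1/X)\}$ with
\[
X \triangleq \sum_{i\neq k} \frac{|{\breve{\bf g}}_k^\dag[n+1]{\breve{\bf g}}_i[n+1]|^2}{||{\breve{\bf g}}_k[n+1]||^4} + \frac{p_u\sum_{i=1}^K \frac{1}{M}{\text{tr}}(\beta_i{\bf I}_M - \alpha^2{\bf \Theta}_i(p,\alpha)) + 1}{p_u||{\breve{\bf g}}_k[n+1]||^2}.
\]
The engine of the proof is the convexity of $g(x)=\log_2(1+1/x)$ on $x>0$ (its second derivative is positive), which by Jensen's inequality lets me pull the expectation into the denominator: $R_k^\text{p,mrc} \geq \log_2(1+1/{\rm E}\{X\})$. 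The whole problem then reduces to evaluating ${\rm E}\{X\}$ in closed form, and this parallels the aged-CSI argument behind \eqref{eq:R_bound_mrc} (Appendix \ref{app:theo:MRC_bound}), with predicted-channel variances replacing the aged ones.

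To compute ${\rm E}\{X\}$ I would lean squarely on Lemma \ref{lemma:1}: since ${\bf \Theta}_k(p,\alpha)$ is a scaled identity, the predicted channel ${\breve{\bf g}}_k[n+1]$ has i.i.d. ${\cal CN}(0,\sigma_k^2)$ entries with $\sigma_k^2 = \frac{1}{M}{\text{tr}}(\alpha^2{\bf \Theta}_k(p,\alpha))$, so $||{\breve{\bf g}}_k[n+1]||^2/\sigma_k^2$ is Gamma distributed with shape $M$. This yields the inverse-norm moment ${\rm E}\{1/||{\breve{\bf g}}_k[n+1]||^2\} = 1/((M-1)\sigma_k^2)$, which handles the second summand of $X$. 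For the interference summand I would condition on ${\breve{\bf g}}_k[n+1]$ and use the independence of ${\breve{\bf g}}_i[n+1]$ and ${\breve{\bf g}}_k[n+1]$ for $i\neq k$, inherited from \eqref{eq:g_breve}: given ${\breve{\bf g}}_k[n+1]$, the inner product ${\breve{\bf g}}_k^\dag[n+1]{\breve{\bf g}}_i[n+1]$ is ${\cal CN}(0,\sigma_i^2||{\breve{\bf g}}_k[n+1]||^2)$, hence ${\rm E}\{|{\breve{\bf g}}_k^\dag[n+1]{\breve{\bf g}}_i[n+1]|^2/||{\breve{\bf g}}_k[n+1]||^4\} = \sigma_i^2/((M-1)\sigma_k^2)$. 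Writing $\epsilon_i^2 = \frac{1}{M}{\text{tr}}(\beta_i{\bf I}_M - \alpha^2{\bf \Theta}_i(p,\alpha))$ and summing the two contributions gives ${\rm E}\{X\} = (p_u\sum_{i\neq k}\sigma_i^2 + p_u\sum_i \epsilon_i^2 + 1)/(p_u(M-1)\sigma_k^2)$; substituting $1/{\rm E}\{X\}$ back into the Jensen bound reproduces \eqref{eq:R_bound_mrc_pre} verbatim.

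The main obstacle I anticipate is bookkeeping rather than any deep inequality: the two reciprocal moments must be combined with exactly the right $(M-1)$ denominator, and the key structural fact enabling the clean closed form is Lemma \ref{lemma:1}. Without the scaled-identity property, ${\breve{\bf g}}_k[n+1]$ would not have i.i.d. entries sharing a single variance, the conditional computation of the interference term would not collapse to a single scalar $\sigma_i^2$, and the neat ratio in \eqref{eq:R_bound_mrc_pre} would be lost. I would therefore foreground Lemma \ref{lemma:1} as the load-bearing ingredient and treat the Gamma inverse-moment identity as a standard auxiliary fact.
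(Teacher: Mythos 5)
Your proposal is correct and follows essentially the same route the paper intends: the paper omits this proof by pointing to Appendix~\ref{app:theo:MRC_bound}, which likewise applies Jensen's inequality to the convex function $\log_2(1+1/x)$, exploits the independence of the interference term from $\|\breve{\bf g}_k[n+1]\|^2$, and uses the inverse-moment identity ${\rm E}\{1/\|\breve{\bf g}_k[n+1]\|^2\}=1/((M-1)\sigma_k^2)$. Your explicit reliance on Lemma~\ref{lemma:1} to justify the i.i.d.\ entries of $\breve{\bf g}_k[n+1]$ is exactly the ingredient that makes the aged-CSI argument carry over, so the proof is sound as written.
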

\proof The proof follows similar lines as the proof of Theorem \ref{theo:MRC_bound}. Hence, it is omitted. \endproof

%
\subsection{ZF Receivers}
\begin{theorem}\label{theo:ZF_law_pre}
   For ZF receivers, with \underline{p}redicted CSI, if each user scales down its transmit power proportionally to $1/M^\gamma$, i.e., $p_u = E_u/M^\gamma$, where $\gamma >0$ and $E_u$ is fixed, we have
  \begin{align}
R_k^{\text{p,zf}} - \log_2 \left(1+ \frac{\alpha^2 \sum\limits_{j = 0}^p{\alpha^{2j}} \tau E_u^2 \beta_k^2}{M^{2\gamma-1}} \right) \overset{M \rightarrow \infty}{\longrightarrow} 0.
  \end{align}
\end{theorem}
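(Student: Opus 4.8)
The plan is to mirror the proof of Theorem \ref{theo:ZF_law} for the aged-CSI case, replacing the aged channel $\alpha{\hat{\bf g}}_k[n]$ with the predicted channel ${\breve{\bf g}}_k[n+1]$ and reusing the asymptotic characterization of ${\bf\Theta}_k(p,\alpha)$ established in the proof of Theorem \ref{theo:MRC_law_pre}. First I would exploit the defining ZF property ${\hat{\bf A}}^\dag[n+1]{\breve{\bf G}}[n+1]={\bf I}_K$, so that ${\hat{\bf a}}_k^\dag[n+1]{\breve{\bf g}}_i[n+1]=\delta_{ki}$. Substituting this into the SINR \eqref{eq:SINR_p_k} collapses the numerator to $p_u$ and annihilates the inter-user interference sum, leaving $\text{SINR}_k^{\text{p,zf}}=p_u/\big(\|{\hat{\bf a}}_k[n+1]\|^2(p_u\sum_i\frac{1}{M}\text{tr}(\beta_i{\bf I}_M-\alpha^2{\bf\Theta}_i(p,\alpha))+1)\big)$. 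Using ${\hat{\bf A}}[n+1]={\breve{\bf G}}[n+1]({\breve{\bf G}}^\dag[n+1]{\breve{\bf G}}[n+1])^{-1}$, I would then identify $\|{\hat{\bf a}}_k[n+1]\|^2=[({\breve{\bf G}}^\dag[n+1]{\breve{\bf G}}[n+1])^{-1}]_{k,k}$.

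The second step is to evaluate this diagonal entry asymptotically. By Lemma \ref{lemma:1}, the entries of ${\breve{\bf g}}_k[n+1]$ are i.i.d.\ with variance $\frac{1}{M}\text{tr}(\alpha^2{\bf\Theta}_k(p,\alpha))$, and the predicted channels of distinct users are independent. Hence, by the law of large numbers, the off-diagonal entries of $\frac{1}{M}{\breve{\bf G}}^\dag[n+1]{\breve{\bf G}}[n+1]$ vanish while its $k$-th diagonal entry tends to that variance. Exactly as in \eqref{eq:G_bar}, this yields $[({\breve{\bf G}}^\dag[n+1]{\breve{\bf G}}[n+1])^{-1}]_{k,k}-\frac{1}{\text{tr}(\alpha^2{\bf\Theta}_k(p,\alpha))}\overset{M\rightarrow\infty}{\longrightarrow}0$.

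Next I would substitute $p_u=E_u/M^\gamma$ and invoke the asymptotic form of ${\bf\Theta}_k(p,\alpha)$ already derived in \eqref{eq:limit_theta}, namely ${\bf\Theta}_k(p,\alpha)\to\beta_k^2\frac{\tau E_u}{M^\gamma}\sum_{j=0}^p\alpha^{2j}{\bf I}_M$, so that $\text{tr}(\alpha^2{\bf\Theta}_k(p,\alpha))\to\alpha^2\beta_k^2\tau E_u M^{1-\gamma}\sum_{j=0}^p\alpha^{2j}$. Meanwhile, the noise-plus-leakage factor obeys \eqref{eq:limit_theta_simple}, i.e.\ $p_u\sum_i\frac{1}{M}\text{tr}(\beta_i{\bf I}_M-\alpha^2{\bf\Theta}_i(p,\alpha))+1\to1$. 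Combining these, the SINR reduces to $\text{SINR}_k^{\text{p,zf}}\to p_u\,\alpha^2\beta_k^2\tau E_u M^{1-\gamma}\sum_{j=0}^p\alpha^{2j}=\alpha^2\big(\sum_{j=0}^p\alpha^{2j}\big)\tau E_u^2\beta_k^2/M^{2\gamma-1}$, which is precisely the argument of the logarithm in the claim. Since the SINR converges to this deterministic quantity, passing the expectation in \eqref{eq:R_p_k} through $\log_2(1+\cdot)$ delivers the stated difference tending to zero.

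The one delicate point, and the place where I would be most careful, is the joint limit in the diagonal-inverse step. The per-user variance $\frac{1}{M}\text{tr}(\alpha^2{\bf\Theta}_k(p,\alpha))$ itself scales with $M$ (as $M^{-\gamma}$), so when forming $[({\breve{\bf G}}^\dag{\breve{\bf G}})^{-1}]_{k,k}=\frac{1}{M}[(\frac{1}{M}{\breve{\bf G}}^\dag{\breve{\bf G}})^{-1}]_{k,k}$ one must track the $M$-dependence of this shrinking variance together with the $1/M$ prefactor, and confirm that the off-diagonal Gram terms are asymptotically negligible relative to the diagonal before inverting. Once this interplay is handled exactly as in the proof of Theorem \ref{theo:ZF_law}, the remaining algebra is routine.
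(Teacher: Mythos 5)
Your proposal is correct and follows essentially the same route as the paper's proof: exploit the ZF identity ${\hat{\bf A}}^\dag[n+1]{\breve{\bf G}}[n+1]={\bf I}_K$ to reduce the SINR to $p_u$ over $[({\breve{\bf G}}^\dag[n+1]{\breve{\bf G}}[n+1])^{-1}]_{k,k}$ times the noise-plus-leakage factor, then apply the law of large numbers together with the asymptotic form of ${\bf\Theta}_k(p,\alpha)$ from \eqref{eq:limit_theta} and the limit \eqref{eq:limit_theta_simple}. The delicate point you flag about tracking the $M^{-\gamma}$ scaling of the per-user variance inside the inverse Gram matrix is handled the same way (implicitly) in the paper's own argument, so no further work is needed.
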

\begin{proof}
With ZF receivers, ${\hat{\bf A}}^\dag[n+1] = ({\breve{\bf G}}^\dag[n+1] {\breve{\bf G}}[n+1])^{-1}{\breve{\bf G}}^\dag[n+1]$, or ${\hat{\bf A}}^\dag[n+1]{\breve{\bf G}}[n+1] = {\bf I}_K$, where ${\breve{\bf G}}[n+1] = [{\breve {\bf g}}_1[n+1], {\breve {\bf g}}_2[n+1], \cdots, {\breve {\bf g}}_K[n+1]]$. Based on this, substituting $p_u = E_u/M^\gamma$ into \eqref{eq:R_p_k}, and after some simple manipulations, we get
\begin{align}\label{eq:R_p_k_zf_2}
 R^\text{p,zf}_k = {\rm E}\left\{ { \log_2 \left(1+ \frac{M\frac{E_u}{M^\gamma}}{ \left( \frac{E_u}{M^\gamma} \sum\limits_{i = 1}^K {\frac{1}{M}{\text {tr}}\left( \beta_i{\bf I}_M - \alpha^2 {\bf \Theta}_i(p,\alpha)\right)}
 + 1 \right)\left[ \left( \frac{\breve{\bf G}^\dag[n+1]{\breve{\bf G}}[n+1]}{M} \right)^{-1}\right]_{k,k}} \right)} \right\}.
\end{align}

Then, we have
\begin{align}\label{eq:G_breve}
\left[\left(\frac{{\breve{\bf G}}^\dag[n+1]{\breve{\bf G}}[n+1]}{M}\right)^{-1}\right]_{k,k} - \frac{1}{\alpha^2 \beta_k^2 \frac{\tau E_u}{M^\gamma} \sum\limits_{j = 0}^p{\alpha^{2j}} } \overset{M \rightarrow \infty}{\longrightarrow} 0,
\end{align}
where the above result is obtained by first following from the law of large numbers and then being based on \eqref{eq:limit_theta}. Plugging \eqref{eq:G_breve} and \eqref{eq:limit_theta_simple} into \eqref{eq:R_p_k_zf_2}, and after some simple algebraic manipulations, we obtain the desired result.
\end{proof}

As expected, Theorem \ref{theo:ZF_law_pre} indicates that with predicted CSI, ZF receivers achieve the same asymptotic power scaling law as the MRC receivers. Similarly, by setting $\gamma = \frac{1}{2}$, we have the following corollary.
\begin{corollary}\label{coro:ZF_predicted_CSI}
For ZF receivers, with \underline{p}redicted CSI, each user can scale down its transmit power at most by $p_u = E_u/\sqrt{M}$ for a fixed $E_u$, and the achievable uplink rate of the \emph{k}-th user becomes
  \begin{align}
 R_k^{\text{p,zf}} \rightarrow \log_2 \left(1+ {\alpha^2 \sum\limits_{j = 0}^p{\alpha^{2j}} \tau E_u^2 \beta_k^2} \right), M \rightarrow \infty.
  \end{align}
\end{corollary}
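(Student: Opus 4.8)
The plan is to obtain this corollary as an immediate specialization of Theorem~\ref{theo:ZF_law_pre} to the power-scaling exponent $\gamma = \frac{1}{2}$, in exact parallel with how Corollary~\ref{coro:MRC_predicted_CSI} was deduced from Theorem~\ref{theo:MRC_law_pre}. The genuine asymptotic work has already been discharged inside the proof of Theorem~\ref{theo:ZF_law_pre}: the law-of-large-numbers limit \eqref{eq:G_breve} for the diagonal entry of $\left(\frac{1}{M}{\breve{\bf G}}^\dag[n+1]{\breve{\bf G}}[n+1]\right)^{-1}$, together with the noise-term simplification \eqref{eq:limit_theta_simple}, already yields the general-$\gamma$ convergence statement. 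Consequently, no new convergence arguments are needed here, and the only substantive points are to justify that $1/\sqrt{M}$ is the \emph{maximal} admissible scaling and to evaluate the limiting expression at $\gamma = \frac{1}{2}$.

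First I would settle the ``at most'' claim by tracking the sign of the exponent $2\gamma - 1$ that appears in the denominator of the logarithm's argument in Theorem~\ref{theo:ZF_law_pre}, mirroring the trichotomy discussion given after Theorem~\ref{theo:MRC_law}. When $\gamma > \frac{1}{2}$ we have $2\gamma - 1 > 0$, so $M^{2\gamma-1} \to \infty$; the argument of the logarithm then tends to zero and $R_k^{\text{p,zf}} \to 0$, meaning the transmit power has been reduced too aggressively. When $\gamma < \frac{1}{2}$ the exponent is negative, $M^{2\gamma-1} \to 0$, and $R_k^{\text{p,zf}}$ grows without bound, showing the power could be cut further while still improving the rate. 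Only $\gamma = \frac{1}{2}$ delivers a finite, strictly positive limit, which singles it out as the critical exponent and establishes that each user may scale its power down at most by $p_u = E_u/\sqrt{M}$.

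Next I would substitute $\gamma = \frac{1}{2}$, so that $2\gamma - 1 = 0$ and hence $M^{2\gamma-1} = 1$. The reference expression in Theorem~\ref{theo:ZF_law_pre} then collapses to the constant $\log_2\!\left(1 + \alpha^2 \sum_{j=0}^p \alpha^{2j}\, \tau E_u^2 \beta_k^2\right)$, and since the theorem guarantees that $R_k^{\text{p,zf}}$ minus this expression vanishes as $M \to \infty$, the claimed limit follows at once.

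I do not anticipate a real obstacle in this corollary, since the hard analytic content lives entirely in Theorem~\ref{theo:ZF_law_pre}; the only item warranting a line of care is the trichotomy argument pinning down $\gamma = \frac{1}{2}$ as optimal, and even that reduces to inspecting whether $M^{2\gamma-1}$ diverges, vanishes, or stays constant. It is worth noting in passing that the resulting limit coincides exactly with the MRC limit of Corollary~\ref{coro:MRC_predicted_CSI}, confirming that ZF and MRC share the same power-scaling law and the same non-zero asymptotic rate under predicted CSI.
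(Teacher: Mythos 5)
Your proposal is correct and follows exactly the paper's route: the paper obtains this corollary simply by setting $\gamma = \tfrac{1}{2}$ in Theorem~\ref{theo:ZF_law_pre}, with the ``at most'' claim justified by the same trichotomy on the sign of $2\gamma-1$ that is spelled out after Theorem~\ref{theo:MRC_law}. No discrepancy to report.
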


In the finite $M$ regime, we obtain the following lower bound on the achievable rate.
\begin{theorem}\label{theo:ZF_bound_pre}
  For ZF receivers, with \underline{p}redicted CSI, the achievable uplink rate of the \emph{k}-th user is lower bounded by ${R}_k^\text{p,zf} \geq {\tilde R}_k^\text{p,zf}$ with
  \begin{align}\label{eq:R_bound_zf_pre}
    {\tilde R}_k^\text{p,zf} \triangleq \log_2\left( 1+ \frac{p_u(M-K)\frac{1}{M}{\text{tr}}(\alpha^2{\bf \Theta}_k(p,\alpha))}{ p_u\sum\limits_{i = 1}^K{\frac{1}{M}{\text {tr}}\left( \beta_i{\bf I}_M - \alpha^2 {\bf \Theta}_i(p,\alpha)\right)} + 1 } \right).
  \end{align}
\end{theorem}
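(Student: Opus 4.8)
The plan is to mirror the argument that (per the remark following Theorem~\ref{theo:MRC_bound_pre}) underlies the companion MRC bound, adapting it to the ZF structure. Starting from the rate expression (\ref{eq:R_p_k}) with the SINR in (\ref{eq:SINR_p_k}), I would first exploit the defining ZF property ${\hat{\bf a}}_k^\dag[n+1]{\breve{\bf g}}_i[n+1] = \delta_{ki}$. This annihilates the inter-user interference sum $\sum_{i\neq k}|{\hat{\bf a}}_k^\dag[n+1]{\breve{\bf g}}_i[n+1]|^2$ and collapses the signal term $|{\hat{\bf a}}_k^\dag[n+1]{\breve{\bf g}}_k[n+1]|^2$ to unity, so that the SINR reduces to
\begin{align}
 {\text{SINR}}_k^{\text p} = \frac{p_u}{||{\hat{\bf a}}_k[n+1]||^2\left(p_u\sum\limits_{i=1}^K \tfrac{1}{M}{\text{tr}}(\beta_i{\bf I}_M - \alpha^2{\bf\Theta}_i(p,\alpha)) + 1\right)},
\end{align}
and the only remaining random quantity is the squared norm $||{\hat{\bf a}}_k[n+1]||^2 = [({\breve{\bf G}}^\dag[n+1]{\breve{\bf G}}[n+1])^{-1}]_{kk}$.

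Second, since $\log_2(1 + c/x)$ is convex in $x>0$ for any fixed $c>0$, Jensen's inequality lets me pull the expectation in (\ref{eq:R_p_k}) inside, yielding the lower bound $R_k^{\text{p,zf}} \geq \log_2(1 + c/{\rm E}\{||{\hat{\bf a}}_k[n+1]||^2\})$ with $c = p_u/(p_u\sum_i \tfrac{1}{M}{\text{tr}}(\beta_i{\bf I}_M-\alpha^2{\bf\Theta}_i(p,\alpha))+1)$. The remaining task is therefore purely to evaluate the single moment ${\rm E}\{[({\breve{\bf G}}^\dag{\breve{\bf G}})^{-1}]_{kk}\}$.

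Third --- and this is where Lemma~\ref{lemma:1} does the essential work --- I would use the fact that ${\bf\Theta}_k(p,\alpha)$ is a scaled identity to conclude that each predicted channel ${\breve{\bf g}}_k[n+1]$ has i.i.d. ${\cal CN}(0,\sigma_k^2)$ entries with $\sigma_k^2 = \tfrac{1}{M}{\text{tr}}(\alpha^2{\bf\Theta}_k(p,\alpha))$, and that distinct columns are mutually independent (the predictor (\ref{eq:g_breve}) acts per user on independent observations). Writing ${\breve{\bf G}} = {\breve{\bf H}}{\bf\Lambda}^{1/2}$ with ${\breve{\bf H}}$ having i.i.d. ${\cal CN}(0,1)$ entries and ${\bf\Lambda} = {\rm diag}(\sigma_1^2,\dots,\sigma_K^2)$, I factor out the scaling to get $[({\breve{\bf G}}^\dag{\breve{\bf G}})^{-1}]_{kk} = \sigma_k^{-2}[({\breve{\bf H}}^\dag{\breve{\bf H}})^{-1}]_{kk}$. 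Since ${\breve{\bf H}}^\dag{\breve{\bf H}}$ is a $K\times K$ complex central Wishart matrix with $M$ degrees of freedom, the standard identity ${\rm E}\{({\breve{\bf H}}^\dag{\breve{\bf H}})^{-1}\} = \frac{1}{M-K}{\bf I}_K$ (valid for $M>K$) gives ${\rm E}\{||{\hat{\bf a}}_k[n+1]||^2\} = 1/((M-K)\sigma_k^2)$. Substituting this into the Jensen bound and writing $\sigma_k^2$ back in trace form yields exactly (\ref{eq:R_bound_zf_pre}).

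The calculation is short; the main conceptual hinge is the third step, specifically the reliance on Lemma~\ref{lemma:1}. Without the scaled-identity property, the columns of ${\breve{\bf G}}$ would carry a general covariance and ${\breve{\bf H}}^\dag{\breve{\bf H}}$ would no longer be a standard Wishart matrix, so the clean closed form $1/(M-K)$ would be unavailable. I would therefore state the independence-across-users and i.i.d.-within-column structure explicitly before invoking the Wishart moment, as that is the only nontrivial input; everything else is a one-line Jensen argument followed by algebra.
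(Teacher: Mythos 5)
Your proposal is correct and follows essentially the same route as the paper: the paper omits the proof of this theorem by pointing to the proof of Theorem~\ref{theo:ZF_bound} (Appendix~B), which likewise uses the ZF identity ${\hat{\bf A}}^\dag[n+1]{\breve{\bf G}}[n+1]={\bf I}_K$ to reduce the SINR to a single random term $\bigl[({\breve{\bf G}}^\dag[n+1]{\breve{\bf G}}[n+1])^{-1}\bigr]_{kk}$, applies Jensen's inequality via the convexity of $\log_2(1+1/x)$, and evaluates the inverse-Wishart moment $\frac{1}{M-K}$ from \cite[Lemma 2.10]{A.M.Tulino}. Your explicit justification that Lemma~\ref{lemma:1} is what renders the columns of ${\breve{\bf G}}$ i.i.d.\ Gaussian (so that the Wishart moment applies) is exactly the adaptation the paper leaves implicit, and the resulting expression matches \eqref{eq:R_bound_zf_pre}.
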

\proof Since the proof follows similar lines as the proof of Theorem \ref{theo:ZF_bound}, it is omitted. \endproof

\section{Extension to Single-cell Downlink}
We now extend the analysis to the single-cell downlink scenario. For exposition purpose, only the MRT precoding scheme is considered here. For the single-cell downlink communication, the BS broadcasts data to the $K$ users. Hence, the received signal at user $k$ for the $(n+1)$-th symbol can be expressed as
    \begin{align}\label{eq:dl:y}
      &y_k^\text{dl}[n+1] \notag\\
      &= \sqrt{p_b} {\bf g}_k^T[n+1] {\bf W}[n+1] {\bf x}^\text{dl}[n+1] + z_k^\text{dl}[n+1]\\
      &= \sqrt{p_b} {\bf g}_k^T[n+1] {\bf w}_k[n+1] {x}_k^\text{dl}[n+1] + \sqrt{p_b} \sum_{i\neq k}{\bf g}_k^T[n+1] {\bf w}_i[n+1] {x}_i^\text{dl}[n+1] + z_k^\text{dl}[n+1],\notag
    \end{align}
    where ${\bf x}^\text{dl}[n+1]$ is a $K \times 1$ vector consisting of the transmit symbols to $K$ users with unit power with ${x}_k^\text{dl}[n+1]$ being the \emph{k}-th element of ${\bf x}^\text{dl}[n+1]$; $z_k^\text{dl}[n+1]$ represents the zero-mean additive white Gaussian noise with unit variance; $p_b$ is the transmit power of the BS; ${\bf W}[n+1] \in {\cal{C}}^{M \times K}$ denotes the precoding matrix, and ${\bf w}_k[n+1]$ is the \emph{k}-the vector of the matrix ${\bf W}[n+1]$.

    For the MRT precoding scheme, the beamforming matrix ${\bf W}[n+1]$ is given by
    \begin{align}
      {\bf W}[n+1] = \lambda {\bar{ \bf G}}^*[n+1],
    \end{align}
    where the normalization constant $\lambda$ is chosen to satisfy a long-term total transmit power constraint at the BS, i.e., ${\rm E} \left\{ ||{\bf W}[n+1] {\bf x}^\text{dl}[n+1] ||^2 \right\} = 1$, and we have
    \begin{align}\label{eq:dl:beta}
      \lambda = \sqrt{\frac{1}{M \alpha^2 \sum_{k=1}^K \sigma_k^2 }},
    \end{align}
   where we set $\sigma_k^2 = \frac{p_p\beta_k^2}{1 + p_p\beta_k}$ for notational simplicity.

    Based on the above analysis, we can rewrite \eqref{eq:dl:y} as
    \begin{align}\label{eq:dl:y:re1}
      y_k^\text{dl}[n+1] &= \sqrt{p_b} \lambda {\bf g}_k^T[n+1] {\bar{\bf g}}_k^*[n+1] {x}_k^\text{dl}[n+1]\notag\\
      & + \sqrt{p_b} \lambda \sum_{i = 1, i\neq k}^K {\bf g}_k^T[n+1] {\bar{\bf g}}_i^*[n+1] {x}_i^\text{dl}[n+1] + z_k^\text{dl}[n+1].
    \end{align}
    To obtain the downlink achievable rate, we utilize the technique developed in [24], which is widely used in the analysis of massive MIMO systems. With this technique, the received signal is rewritten as a known mean gain times the desired symbol, plus an uncorrelated effective noise. Thus \eqref{eq:dl:y:re1} can be re-expressed as
    \begin{align}
      y_k^\text{dl}[n+1] = \sqrt{p_b} \lambda {\rm E} \left\{ {\bf g}_k^T[n+1] {\bar{\bf g}}_k^*[n+1] \right\} {x}_k^\text{dl}[n+1] + n_k[n+1],
    \end{align}
    where $n_k[n+1]$ is considered as the effective noise, given by
    \begin{align}
     n_k[n+1] &= \sqrt{p_b} \lambda \left( {\bf g}_k^T[n+1] {\bar{\bf g}}_k^*[n+1] - {\rm E} \left\{ {\bf g}_k^T[n+1] {\bar{\bf g}}_k^*[n+1] \right\} \right) {x}_k^\text{dl}[n+1] \\
      &+ \sqrt{p_b} \lambda \sum_{i = 1, i\neq k}^K {\bf g}_k^T[n+1] {\bar{\bf g}}_i^*[n+1] {x}_i^\text{dl}[n+1] + z_k^\text{dl}[n+1].
    \end{align}
    Therefore, we can obtain an achievable (sub-optimal) rate as
    \begin{align}\label{eq:dl:R}
      R_k^\text{dl} = \log_2 \left(1 + \frac{|{\rm E} \left\{ {\bf g}_k^T[n+1] {\bar{\bf g}}_k^*[n+1] \right\}|^2}{{\text{Var}} \left({\bf g}_k^T[n+1] {\bar{\bf g}}_k^*[n+1]\right) + \sum_{i=1,i\neq k}^K {\rm E} \left\{ |{\bf g}_k^T[n+1] {\bar{\bf g}}_i^*[n+1]|^2 \right\} + \frac{1}{p_b\lambda^2} } \right).
    \end{align}
    \begin{theorem}
    For MRC receivers, with \underline{a}ged CSI, the achievable downlink rate of the \emph{k}-th user is given by
      \begin{align}\label{eqn:11}
      R_k^\text{dl} = \log_2 \left(1 + \frac{\alpha^2 M \sigma_k^4}{\left(\beta_k + \frac{1}{p_b}\right) \sum_{i=1}^K \sigma_i^2 } \right).
    \end{align}
    \end{theorem}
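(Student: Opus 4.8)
The plan is to evaluate directly each of the four statistical quantities appearing in \eqref{eq:dl:R}, using the channel decomposition already established. Throughout I suppress the time indices and write ${\bf g}_k$, ${\hat{\bf g}}_k$, ${\bar{\bf g}}_k$, ${\tilde{\bf e}}_k$ for the relevant quantities at symbols $n+1$ or $n$. The three facts I would rely on are: the aged-CSI decomposition ${\bf g}_k = \alpha{\hat{\bf g}}_k + {\tilde{\bf e}}_k$ together with ${\bar{\bf g}}_k = \alpha{\hat{\bf g}}_k$ from \eqref{eq:g_bar_aged}; the element variances ${\rm E}\{|\hat g_{mk}|^2\} = \sigma_k^2$ and ${\rm E}\{|\tilde e_{mk}|^2\} = \beta_k - \alpha^2\sigma_k^2$; and the mutual independence of ${\hat{\bf g}}_k$, ${\tilde{\bf e}}_k$, and $\{{\hat{\bf g}}_i\}_{i\neq k}$.

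First I would compute the desired-signal mean. Writing ${\bf g}_k^T{\bar{\bf g}}_k^* = \alpha^2\|{\hat{\bf g}}_k\|^2 + \alpha\,{\tilde{\bf e}}_k^T{\hat{\bf g}}_k^*$, the second term has zero mean by independence, so ${\rm E}\{{\bf g}_k^T{\bar{\bf g}}_k^*\} = \alpha^2 M\sigma_k^2$, and the numerator of \eqref{eq:dl:R} equals $\alpha^4 M^2\sigma_k^4$. Next, for the variance I would use the same split into $A \triangleq \alpha^2\|{\hat{\bf g}}_k\|^2$ and $B \triangleq \alpha\,{\tilde{\bf e}}_k^T{\hat{\bf g}}_k^*$. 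The cross term vanishes in expectation because ${\tilde{\bf e}}_k$ is zero-mean and independent of ${\hat{\bf g}}_k$, leaving ${\rm E}\{|{\bf g}_k^T{\bar{\bf g}}_k^*|^2\} = {\rm E}\{A^2\} + {\rm E}\{|B|^2\}$. Here the delicate computation is the fourth moment of a complex Gaussian norm, ${\rm E}\{\|{\hat{\bf g}}_k\|^4\} = M(M+1)\sigma_k^4$, giving ${\rm E}\{A^2\} = \alpha^4 M(M+1)\sigma_k^4$, while conditioning on ${\hat{\bf g}}_k$ gives ${\rm E}\{|B|^2\} = \alpha^2 M(\beta_k-\alpha^2\sigma_k^2)\sigma_k^2$. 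Subtracting $|{\rm E}\{\cdot\}|^2 = \alpha^4 M^2\sigma_k^4$ then produces the clean cancellation ${\text{Var}}({\bf g}_k^T{\bar{\bf g}}_k^*) = \alpha^2 M\beta_k\sigma_k^2$.

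For the interference terms, since ${\bar{\bf g}}_i^* = \alpha{\hat{\bf g}}_i^*$ with ${\hat{\bf g}}_i$ independent of ${\bf g}_k$ for $i\neq k$, conditioning on ${\bf g}_k$ gives ${\rm E}\{|{\bf g}_k^T{\bar{\bf g}}_i^*|^2\} = \alpha^2\sigma_i^2\,{\rm E}\{\|{\bf g}_k\|^2\} = \alpha^2 M\beta_k\sigma_i^2$, so the interference sum equals $\alpha^2 M\beta_k\sum_{i\neq k}\sigma_i^2$. Finally, substituting $\lambda$ from \eqref{eq:dl:beta} yields $\frac{1}{p_b\lambda^2} = \frac{M\alpha^2}{p_b}\sum_{i}\sigma_i^2$.

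The last step is to assemble the denominator: the variance and the interference sum recombine as $\alpha^2 M\beta_k\sum_i\sigma_i^2$, and adding the noise term factors the whole denominator as $\alpha^2 M\sum_i\sigma_i^2\left(\beta_k + \frac{1}{p_b}\right)$; dividing the numerator $\alpha^4 M^2\sigma_k^4$ by this gives \eqref{eqn:11}. The main obstacle is the variance step: getting the $M(M+1)$ fourth-moment term right and recognizing that it cancels against $(\,{\rm E}\{\cdot\})^2$ to leave only the $O(M)$ contribution, which then merges cleanly with the interference sum; the remaining manipulations are routine second-moment bookkeeping.
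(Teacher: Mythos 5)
Your proposal is correct and follows essentially the same route as the paper's proof: the same mean/variance/interference decomposition, the same fourth-moment identity ${\rm E}\{\|{\hat{\bf g}}_k\|^4\}=M(M+1)\sigma_k^4$ (the paper cites \cite[Lemma 2.9]{A.M.Tulino}), the same cancellation leaving ${\text{Var}}=\alpha^2 M\beta_k\sigma_k^2$, and the same assembly of the denominator. No gaps.
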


    \proof
    The main task is to evaluate each term in \eqref{eq:dl:R}, which we do in the following:

    1) Computation of ${\rm E} \left\{ {\bf g}_k^T[n+1] {\bar{\bf g}}_k^*[n+1] \right\}$

    We have
    \begin{align}\label{eq:dl:g}
      {\bf g}_k^T[n+1] {\bar{\bf g}}_k^*[n+1] &= {\bar{\bf g}}_k^T[n+1] {\bar{\bf g}}_k^*[n+1] + {\tilde{\bf e}}_k^T[n+1] {\bar{\bf g}}_k^*[n+1]\\
       &= \alpha^2 ||{\hat{\bf g}}_k^T[n] ||^2 + {\tilde{\bf e}}_k^T[n+1] {\bar{\bf g}}_k^*[n+1].
    \end{align}
    Therefore,
    \begin{align}\label{eq:dl:g:E}
      {\rm E} \left\{ {\bf g}_k^T[n+1] {\bar{\bf g}}_k^*[n+1] \right\} = \alpha^2 {\rm E} \left\{ ||{\hat{\bf g}}_k[n]||^2 \right\} = \alpha^2 M \sigma_k^2.
    \end{align}
    2) Computation of ${\text{Var}} \left({\bf g}_k^T[n+1] {\bar{\bf g}}_k^*[n+1]\right)$

    From \eqref{eq:dl:g} and \eqref{eq:dl:g:E}, the variance of ${\bf g}_k^T[n+1] {\bar{\bf g}}_k^*[n+1]$ is given by
    \begin{align}
      &{\text{Var}} \left({\bf g}_k^T[n+1] {\bar{\bf g}}_k^*[n+1]\right) = {\rm E} \left\{ |{\bf g}_k^T[n+1] {\bar{\bf g}}_k^*[n+1]|^2 \right\} - \left(\alpha^2 M \sigma_k^2 \right)^2 \\
      &= {\rm E} \left\{ | \alpha^2 ||{\hat{\bf g}}_k^T[n] ||^2 + {\tilde{\bf e}}_k^T[n+1] {\bar{\bf g}}_k^*[n+1] |^2\right\} - \left(\alpha^2 M \sigma_k^2 \right)^2 \\
      &= \alpha^4 {\rm E} \left\{ ||{\hat{\bf g}}_k^T[n] ||^4 \right\} + {\rm E} \left\{ |{\tilde{\bf e}}_k^T[n+1] {\bar{\bf g}}_k^*[n+1]|^2 \right\} - \left(\alpha^2 M \sigma_k^2 \right)^2.
    \end{align}
    By using \cite[Lemma 2.9]{A.M.Tulino}, we obtain
    \begin{align}\label{eq:dl:var}
      {\text{Var}} \left({\bf g}_k^T[n+1] {\bar{\bf g}}_k^*[n+1]\right) = \alpha^4 \sigma_k^4 M(M+1) + \alpha^2 \sigma_k^2 (\beta_k - \alpha^2\sigma_k^2)M - \left(\alpha^2 M \sigma_k^2 \right)^2 = \alpha^2\sigma_k^2\beta_k M.
    \end{align}
    3) Computation of $\sum_{i=1,i\neq k}^K {\rm E} \left\{ |{\bf g}_k^T[n+1] {\bar{\bf g}}_i^*[n+1]|^2 \right\}$

    For $i\neq k$, we have
    \begin{align}
      {\rm E} \left\{ |{\bf g}_k^T[n+1] {\bar{\bf g}}_i^*[n+1]|^2 \right\} = \alpha^2 \beta_k \sigma_i^2 M.
    \end{align}
    Therefore,
    \begin{align}\label{eq:dl:sum}
      \sum_{i=1,i\neq k}^K {\rm E} \left\{ |{\bf g}_k^T[n+1] {\bar{\bf g}}_i^*[n+1]|^2 \right\} = \alpha^2 \beta_k M \sum_{i=1,i\neq k}^K \sigma_i^2.
    \end{align}

    Substituting \eqref{eq:dl:beta}, \eqref{eq:dl:g:E}, \eqref{eq:dl:var}, and \eqref{eq:dl:sum} into \eqref{eq:dl:R}, we arrive at the desired result.
    \endproof

    \begin{theorem}
    For MRC receivers, with \underline{a}ged CSI, if the BS scales down its transmit power proportionally to $1/M^\beta$, i.e., $p_b = E_b/M^\beta$, where $\beta > 0$ and $E_b$ is fixed, we have
  \begin{align}\label{eqn:12}
 R_k^\text{dl} - \log_2 \left(1 + \frac{\alpha^2 {\tau E_u E_b} \beta_k^4 }{{M^{\beta - \frac{1}{2}}} \sum_{i=1}^K \beta_i^2} \right) \overset{M \rightarrow \infty}{\longrightarrow} 0.
  \end{align}
    \end{theorem}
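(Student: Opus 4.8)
The plan is to start from the exact closed-form downlink rate \eqref{eqn:11} established in the preceding theorem and to track the asymptotics of each factor as $M\to\infty$ under the joint scaling $p_u = E_u/\sqrt{M}$ (so that $p_p = \tau p_u = \tau E_u/\sqrt{M}$) together with $p_b = E_b/M^\beta$. The appearance of both $E_u$ and the exponent $M^{\beta-1/2}$ in the target makes clear that the uplink pilot power is held at its optimal $1/\sqrt{M}$ scaling while the downlink power is additionally scaled by $1/M^\beta$; the whole statement then reduces to a limit computation inside the logarithm.

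First I would write the relevant quantities exactly. Since $\sigma_k^2 = \frac{p_p\beta_k^2}{1+p_p\beta_k}$, substituting $p_p=\tau E_u/\sqrt{M}$ gives $\sigma_k^2 = \frac{\tau E_u\beta_k^2}{\sqrt{M}+\tau E_u\beta_k}$. From this I read off the leading behaviour $\sqrt{M}\,\sigma_k^2 \to \tau E_u\beta_k^2$, hence $M\sigma_k^4 \to (\tau E_u\beta_k^2)^2$ and $\sqrt{M}\sum_{i=1}^K\sigma_i^2 \to \tau E_u\sum_{i=1}^K\beta_i^2$. In parallel, because $\beta>0$, the bracketed factor obeys $\beta_k + \frac{1}{p_b} = \beta_k + M^\beta/E_b \sim M^\beta/E_b$.

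Next I would assemble these into the SINR appearing in \eqref{eqn:11}. Denoting it $S_M$, the numerator satisfies $\alpha^2 M\sigma_k^4 \to \alpha^2(\tau E_u\beta_k^2)^2$, while the denominator $(\beta_k+1/p_b)\sum_i\sigma_i^2 \sim \frac{M^\beta}{E_b}\cdot\frac{\tau E_u}{\sqrt{M}}\sum_i\beta_i^2$. Dividing, one factor of $\tau E_u$ and one power of $M^{1/2}$ cancel, leaving $S_M \sim \frac{\alpha^2\tau E_u E_b\beta_k^4}{M^{\beta-1/2}\sum_i\beta_i^2}=:T_M$, that is $S_M/T_M\to 1$.

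Finally, I would convert $S_M/T_M\to1$ into the claimed vanishing of $R_k^\text{dl}-\log_2(1+T_M)$. Writing $S_M=T_M(1+\epsilon_M)$ with $\epsilon_M\to0$, one has $\frac{1+S_M}{1+T_M}=1+\epsilon_M\frac{T_M}{1+T_M}$, and since $\frac{T_M}{1+T_M}\le1$ for every $M$, this ratio tends to $1$ regardless of whether $T_M$ vanishes ($\beta>1/2$), stabilizes ($\beta=1/2$), or diverges ($\beta<1/2$); hence $R_k^\text{dl}-\log_2(1+T_M)=\log_2\frac{1+S_M}{1+T_M}\to0$. The one step requiring genuine care, which I would flag as the main obstacle, is precisely this last one: the argument must be uniform across the three regimes of $\beta$, so one cannot simply appeal to $S_M-T_M\to0$ (which can fail when $T_M\to\infty$); the boundedness of $\frac{T_M}{1+T_M}$ is what rescues the conclusion. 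The intervening manipulations in the first three steps are entirely routine.
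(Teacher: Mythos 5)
Your proposal is correct and follows essentially the same route as the paper, which simply substitutes $p_p=\tau p_u=\tau E_u/\sqrt{M}$ and $p_b=E_b/M^\beta$ into the closed-form rate \eqref{eqn:11} and simplifies; your asymptotics ($\sqrt{M}\sigma_k^2\to\tau E_u\beta_k^2$, the cancellation of one factor of $\tau E_u$ and of $M^{1/2}$) reproduce the stated limit exactly. Your final step, using the boundedness of $T_M/(1+T_M)$ to make the convergence uniform over the three regimes of $\beta$, is a point the paper glosses over as ``simple algebraic manipulations,'' so it is a welcome tightening rather than a deviation.
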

  \proof
  As in the uplink scenario, substituting $p_p = \tau p_u = \tau \frac{E_u}{\sqrt{M}}$ into \eqref{eqn:11}, and after some simple algebraic manipulations, we obtain the desired result.
  \endproof
  When $\beta = \frac{1}{2}$, $R_k^\text{dl}$ converges to a non-zero limit, indicating that we can at most scale down the transmit power of the BS proportionally to $1/\sqrt{M}$ in the downlink scenario, which is the same as in the uplink scenario.

\section{Extension to Multi-Cell Systems}
In this section, we study the more general multi-cell scenario. In particular, we focus on the characterization of the power scaling law of the system with/without channel prediction. Since MRC and ZF receivers attain the same asymptotic performance, without loss of generality, we only consider the MRC receiver in the subsequent analysis.

We adopt the multi-cell model as in \cite{K.T.Truong1}, with a cellular network of $C$ cells sharing the same frequency band. Each cell includes a central BS with $M$ antennas and $K$ $(K \leq M)$ single antenna noncooperative users. Therefore, the $M \times 1$ received vector at time $n$ for the \emph{b}-th BS is given by
\begin{align}
  {\bf y}_b[n] = \sqrt{p_u} \sum\limits_{c = 1}^C  {\bf G}_{bc}[n] {\bf x}_c[n] + {\bf z}_b[n],
\end{align}
where ${\bf G}_{bc}[n]$ represents the $M \times K$ matrix between the \emph{b}-th BS and the $K$ users in the \emph{c}-th cell, whose \emph{k}-th column vector is denoted by ${\bf g}_{bck}[n]$, $p_u$ is the transmit power of the user, ${\bf x}_c$ denotes the $K \times 1$ vector transmitted by the $K$ users in the \emph{c}-th cell, and ${\bf z}_b$ is zero-mean additive white Gaussian noise with unit power at BS $b$.

Similar to the single-cell scenario, the channel vector from user $k$ in cell $c$ to BS $b$ at time $n$ is modeled as
\begin{align}
  {\bf g}_{bck}[n] =  {\bf h}_{bck}[n] \sqrt{\beta_{bck}},
\end{align}
where ${\bf h}_{bck}[n]$ is the small-scale fading coefficient from the \emph{k}-th user in cell $c$ to the \emph{b}-th BS, which is i.i.d. ${{\cal CN} (0,1)}$, and $\beta_{bck}$ models the large-scale fading effect.

Capitalizing on the asymptotic expressions $(53)$ and $(75)$ presented in \cite{K.T.Truong1}, we obtain the following results on the power scaling law for MRC receivers.
\begin{proposition}\label{prop:mul:cell:aged:CSI}
  For the multi-cell scenario, with \underline{a}ged CSI, if each user scales down its transmit power proportionally to $1/M^\gamma$, i.e., $p_u = E_u/M^\gamma$, where $\gamma >0$ and $E_u$ is fixed, the achievable rate of user $k$ in cell $b$ is given by
\begin{align}\label{eq:mul:cell:aged}
    R^\text a_{bk} - \log_2 \left( 1 + \frac{ \frac{\alpha^2 \tau E_u^2 \beta^2_{bbk}}{M^{2\gamma - 1}}}{\frac{\beta_{bbk} E_u}{M^\gamma} + 1 + \sum\limits_{(c,i) \neq (b,k)} \frac{\beta_{bci} E_u}{M^\gamma} + \alpha^2 \sum\limits_{ c \neq b} \frac{\tau E_u^2 \beta^2_{bck}}{M^{2 \gamma - 1}} } \right) \overset{M \rightarrow \infty}{\longrightarrow} 0,
  \end{align}
 where the third term in the denominator is due to both the intra and the inter-cell interference, while the fourth term comes from the inter-cell interference caused by pilot contamination.
\end{proposition}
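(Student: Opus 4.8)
The plan is to start from the deterministic-equivalent SINR expression for the multi-cell MRC receiver given by equations $(53)$ and $(75)$ of \cite{K.T.Truong1}, which already capture the asymptotic ($M,K\to\infty$) behaviour of the achievable rate under aged CSI and pilot reuse, and to feed the power-scaling rule $p_u=E_u/M^\gamma$ (equivalently $p_p=\tau p_u=\tau E_u/M^\gamma$) into that expression, tracking the order in $M$ of every term. Since $\log_2(1+\cdot)$ is continuous, it suffices to establish the limit of the SINR and then invoke continuity to obtain \eqref{eq:mul:cell:aged}.

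First I would recall the MMSE channel estimate in the presence of pilot contamination: because user $k$ in every cell reuses the same pilot, the correlator output at BS $b$ associated with that pilot aggregates ${\bf g}_{bck}[n]$ over all $c$ plus noise, so the estimate $\hat{\bf g}_{bbk}[n]$ has per-entry variance $\sigma_{bbk}^2 = p_p\beta_{bbk}^2/(1+p_p\sum_{c}\beta_{bck})$, and $\hat{\bf g}_{bbk}[n]$ is correlated with the cochannel estimates $\hat{\bf g}_{bck}[n]$ for $c\neq b$. Under the scaling $p_p=\tau E_u/M^\gamma\to 0$, the denominator $1+p_p\sum_c\beta_{bck}\to 1$, so the estimation variance simplifies to $\sigma_{bck}^2\to\tau p_u\beta_{bck}^2=\tau E_u\beta_{bck}^2/M^\gamma$. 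This is the single algebraic simplification that makes all the $\beta^2/M$ factors in the statement appear.

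Next I would perform the asymptotic bookkeeping on the four constituents of the SINR. After normalising the MRC combiner (dividing numerator and denominator by $M^2$ as in the single-cell proofs of Theorems \ref{theo:MRC_law} and \ref{theo:ZF_law}), the coherent signal term behaves like $\alpha^2 M p_u\sigma_{bbk}^2\sim\alpha^2\tau E_u^2\beta_{bbk}^2/M^{2\gamma-1}$; the additive noise contributes the constant $1$; the benign intra- and inter-cell interference from users that do not share the pilot collapses to $p_u\sum_{(c,i)\neq(b,k)}\beta_{bci}\sim O(M^{-\gamma})$ by the law of large numbers (the cross terms $\frac{1}{M}\hat{\bf g}_{bbk}^\dag\hat{\bf g}_{bci}\to 0$); and the pilot-contamination term, arising from the correlation between $\hat{\bf g}_{bbk}$ and $\hat{\bf g}_{bck}$ for $c\neq b$, behaves like $\alpha^2 M p_u\sum_{c\neq b}\sigma_{bck}^2\sim\alpha^2\sum_{c\neq b}\tau E_u^2\beta_{bck}^2/M^{2\gamma-1}$. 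Assembling these four orders reproduces exactly the numerator and the four denominator terms of \eqref{eq:mul:cell:aged}.

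The main obstacle is isolating the pilot-contamination term and verifying that it carries the same $M^{1-2\gamma}$ order as the desired signal rather than the $M^{-\gamma}$ order of the benign interference. This is the crux: the coherent correlation $\frac{1}{M}\hat{\bf g}_{bbk}^\dag\hat{\bf g}_{bck}$ for cochannel users ($c\neq b$) tends to a nonzero constant, whereas all non-cochannel cross correlations vanish, so the contamination term persists at the signal's scale while every other interference term is asymptotically dominated by the noise floor of $1$. Confirming this order --- and hence that choosing $\gamma=\frac{1}{2}$ leaves both the signal and the contamination at $O(1)$, producing the nonzero SINR floor that distinguishes the multi-cell case --- is where the delicate part of the argument lies; the remaining algebra is routine substitution.
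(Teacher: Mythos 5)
Your proposal follows essentially the same route as the paper, which simply substitutes $p_u=E_u/M^\gamma$ into the deterministic-equivalent SINR of Theorem 2 (Eqs.\ (53) and (75)) of the Truong--Heath reference and performs the order-in-$M$ bookkeeping; your sketch in fact spells out the ``lengthy algebraic manipulations'' the paper leaves implicit, correctly identifying the contaminated MMSE estimation variance and the reason the pilot-contamination term survives at the same $M^{1-2\gamma}$ order as the signal. The only nit is that your list of denominator contributions omits the $p_u\beta_{bbk}$ self-term from the desired user's own estimation/aging error, but since it vanishes for any $\gamma>0$ this does not affect the limit.
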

\proof Substituting $p_u = E_u/M^\gamma$ into Theorem 2 of \cite{K.T.Truong1}, the desired result can be obtained after some lengthy algebraic manipulations.\endproof

As expected, Proposition \ref{prop:mul:cell:aged:CSI} suggests that the asymptotic achievable rate $R_{bk}$ depends on the choice of $\gamma$. It is easy to show that for $\gamma >1/2$, the achievable rate $R_{bk}$ converges to zero. Similarly, for $\gamma = 1/2$, $R_{bk}$ converges to a non-zero limit given by
\begin{align}
  R^\text a_{bk} \rightarrow \log_2 \left( 1 + \frac{ \alpha^2 \tau E_u^2 \beta^2_{bbk} }{1 + \alpha^2 \sum\limits_{c \neq b}  \tau E_u^2 \beta^2_{bck} } \right), M \rightarrow \infty.
  \end{align}

Once again, we see that the $1/\sqrt{M}$ power scaling law still holds under the multi-cell scenario. In addition, it is observed that the non-zero limit is affected not only by the channel aging effect, but also by the inter-cell interference due to the pilot contamination caused by pilot reuse.

We now look at the case $0 < \gamma<1/2$. Interestingly, it is found that, unlike the single-cell scenario where the achievable rate grows unbounded, $R_{bk}$ also converges to a non-zero limit given by
  \begin{align}\label{eqn:qbk}
  R^\text a_{bk} \rightarrow \log_2 \left( 1 + \frac{\beta^2_{bbk} }{\sum\limits_{ c \neq b}  \beta^2_{bck} } \right), M \rightarrow \infty.
  \end{align}

Surprisingly, we see that the effect of channel aging vanishes and $R_{bk}$ is independent of the transmit power which coincides with the results presented in \cite{T.L.Marzetta}. The possible reason is that, when $0 < \gamma <1/2$, the system operates in an interference-limited regime; as such, if each terminal scales its average received power by the same factor, then the resultant signal-to-interference ratio (SIR) remains unchanged.

We now consider the case with channel prediction, and present the following key result:
\begin{proposition}\label{prop:mul:cell:predicted:CSI}
   For the multi-cell scenario, with \underline{p}redicted CSI, if each user scales down its transmit power proportionally to $1/M^\gamma$, i.e., $p_u = E_u/M^\gamma$, where $\gamma >0$ and  $E_u$ is fixed, the achievable rate of user $k$ in cell $b$ is given by
 \begin{align}\label{eq:mul:cell:predicted}
    R^\text p_{bk} - \log_2 \left( 1 + \frac{ \frac{\alpha^2 \sum\limits_{j = 0}^p \alpha^{2j} \tau E_u^2 \beta^2_{bbk}}{M^{2\gamma - 1}}}{\frac{\beta_{bbk} E_u}{M^\gamma} + 1 + \sum\limits_{(c,i) \neq (b,k)} \frac{\beta_{bci} E_u}{M^\gamma} + \alpha^2 \sum\limits_{j = 0}^p \alpha^{2j} \sum\limits_{ c \neq b} \frac{\alpha^{2p} \tau E_u^2 \beta^2_{bck}}{M^{2 \gamma - 1}} } \right) \overset{M \rightarrow \infty}{\longrightarrow} 0.
  \end{align}
\end{proposition}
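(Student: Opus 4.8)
The plan is to mirror the proof of Proposition \ref{prop:mul:cell:aged:CSI}: rather than re-deriving the SINR from scratch, I would start from the deterministic-equivalent rate expression for the multi-cell MRC receiver with channel prediction, namely equation~(75) of \cite{K.T.Truong1}, insert the power-scaling law $p_u = E_u/M^\gamma$ (equivalently $p_p = \tau p_u = \tau E_u/M^\gamma$), and pass to the limit $M \to \infty$. The prediction-dependent quantities that appear in (75) --- the covariance of the predicted channel and of the prediction error --- are exactly the multi-cell analogues of $\alpha^2 {\bf \Theta}_k(p,\alpha)$ and $\beta_k{\bf I}_M - \alpha^2{\bf \Theta}_k(p,\alpha)$. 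Invoking Lemma \ref{lemma:1}, each such matrix is a scaled identity, and under the chosen scaling the limit \eqref{eq:limit_theta} reduces it to $\beta^2\,\tfrac{\tau E_u}{M^\gamma}\sum_{j=0}^p \alpha^{2j}{\bf I}_M$. This is precisely what supplies the prediction-gain factor $\sum_{j=0}^p \alpha^{2j}$ that distinguishes \eqref{eq:mul:cell:predicted} from the aged-CSI result \eqref{eq:mul:cell:aged}; the first move is therefore to substitute these scalar equivalents everywhere in (75).

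With the matrices collapsed to scalars, the second step is a careful order-in-$M$ bookkeeping of the SINR. I would classify the denominator terms by how they scale. The thermal noise contributes an $O(1)$ term tending to $1$. The intra- and inter-cell non-coherent interference, of the form $\sum_{(c,i)\neq(b,k)} p_u \beta_{bci}$, scales as $1/M^\gamma$ and produces the $\tfrac{\beta_{bci}E_u}{M^\gamma}$ term together with the self-leakage piece $\tfrac{\beta_{bbk}E_u}{M^\gamma}$. The desired signal and the coherent pilot-contamination interference both originate from coherent combining and hence scale as $M^2$ before normalization; after dividing numerator and denominator by ${\rm E}\{\|\breve{\bf g}_{bbk}[n+1]\|^2\}$ they survive as terms of order $M^{2\gamma-1}$. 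Tracking their constants, the signal acquires the factor $\alpha^2 \sum_{j=0}^p\alpha^{2j}\beta_{bbk}^2$ and the pilot-contamination sum acquires $\alpha^2 \sum_{j=0}^p\alpha^{2j}\,\alpha^{2p}\beta_{bck}^2$; collecting these against the $O(1)$ and $O(1/M^\gamma)$ pieces yields the stated expression, and showing that every term scaling strictly slower than these vanishes completes the limit.

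The delicate part, and the \emph{main obstacle}, is the coherent pilot-contamination term --- the fourth term in the denominator of \eqref{eq:mul:cell:predicted}. Because it is a coherent-combining term it scales as $M^2$, the same order as the desired signal, so it neither vanishes nor dominates under the $1/M^\gamma$ scaling; its exact coefficient must be retained, and it is this coefficient that carries the extra $\alpha^{2p}$ factor absent from the aged-CSI case. That factor reflects the way the $p$-th order Wiener predictor couples, through its tap weights and the temporal correlation $\alpha$, to the co-pilot interfering channels of the other cells at the prediction instant, so isolating it correctly from the contamination part of (75) under the substituted power law is the step that demands the most attention. Everything else --- verifying that the non-coherent interference and the prediction-error power are of lower order and that the noise term tends to unity --- is routine once the scalar reductions from Lemma \ref{lemma:1} and \eqref{eq:limit_theta} are in place, so I would devote the bulk of the argument to pinning down the pilot-contamination coefficient and to confirming that, as in Proposition \ref{prop:mul:cell:aged:CSI}, the $1/\sqrt{M}$ scaling ($\gamma = 1/2$) is the fastest power reduction compatible with a non-vanishing rate.
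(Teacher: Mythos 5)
Your proposal follows essentially the same route as the paper, whose proof consists of substituting $p_u = E_u/M^\gamma$ into Theorem 3 (equation (75)) of \cite{K.T.Truong1} and carrying out the resulting algebraic manipulations. Your additional bookkeeping --- collapsing the prediction covariances via Lemma \ref{lemma:1} and \eqref{eq:limit_theta}, and isolating the $\alpha^{2p}$ coefficient of the coherent pilot-contamination term --- is exactly the ``tedious algebra'' the paper leaves implicit, so the approaches coincide.
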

\proof Substituting $p_u = E_u/M^\gamma$ into the Theorem 3 of \cite{K.T.Truong1}, and after some tedious algebraic manipulations, we get the desired result.\endproof

We now discuss the impact of $\gamma$ on the asymptotic achievable rate based on Proposition \ref{prop:mul:cell:predicted:CSI}. It can be easily shown that, for $\gamma>1/2$, $R_{bk} \rightarrow 0$, and for $\gamma=1/2$,
  \begin{align}
  R^\text p_{bk} \rightarrow \log_2 \left( 1 + \frac{ \alpha^2 \sum\limits_{j = 0}^p \alpha^{2j} \tau E_u^2 \beta^2_{bbk} }{1 + \alpha^2 \sum\limits_{j = 0}^p \alpha^{2j} \sum\limits_{c \neq b} \alpha^{2p} \tau E_u^2 \beta^2_{bck} } \right), M \rightarrow \infty.
  \end{align}

As expected, the asymptotic achievable rate is determined by both the channel aging effect and the inter-cell interference. Similarly, for $\gamma<1/2$, we have
  \begin{align}
  R^\text p_{bk} \rightarrow \log_2 \left( 1 + \frac{ \beta^2_{bbk} }{\alpha^{2p}\sum\limits_{c \neq b}  \beta^2_{bck} } \right), M \rightarrow \infty.
  \end{align}

Now, compared to the achievable rate of systems with aged CSI presented in (\ref{eqn:qbk}), we observe that the achievable rate with channel prediction is strictly higher, due to the fact that $\alpha^{2p}<1$.

\section{Numerical Results}
In this section, we provide numerical results to validate the analytical expressions derived in the previous sections. Hereafter, we assume that $\tau = K$.
\subsection{Single-cell uplink scenario}
We consider a single-cell with a radius of $R = 1000$ meters and assume a guard range of $r_0 = 100$ meters, which specifies the distance between the nearest user and the BS. All the users are uniformly distributed within the cell. The large scale fading is modeled as $\beta_k = z_k/(r_k/r_0)^\upsilon$, where $z_k$ is a log-normal random variable with standard deviation $\sigma$ ($\sigma = 8$ dB) denoting the shadow fading effect, $r_k$ represents the distance between the \emph{k}-th user and the BS, and $\upsilon$ ($\upsilon = 3.8$) is the path loss exponent.

Fig. \ref{fig:rate_snr} examines the tightness of the proposed analytical lower bounds given in \eqref{eq:R_bound_mrc} and \eqref{eq:R_bound_zf} with aged CSI, as well as in \eqref{eq:R_bound_mrc_pre} and \eqref{eq:R_bound_zf_pre} with predicted CSI for different $\alpha$ and $p$. As can be readily observed, the proposed lower bounds almost overlap with the exact simulations curves, demonstrating their tightness. In addition, we see the intuitive result that channel aging degrades the achievable sum-rate, while channel prediction helps to recover part of the sum-rate loss due to channel aging. Moreover, it is observed that the ZF receivers attain a higher sum-rate than the MRC receivers. Finally, we observe that channel aging causes a substantial reduction in the sum rate of ZF receivers and a relatively small decrease in the sum rate of MRC receivers at the high SNR regime, indicating that the channel aging effect has a much greater impact on ZF receivers. This can be attributed to the poor interference cancellation capabilities of ZF receivers when channel aging is present, while MRC tries to maximize the effective SINR of each target user.

\begin{figure}[!ht]
    \centering
    \includegraphics[scale=0.6]{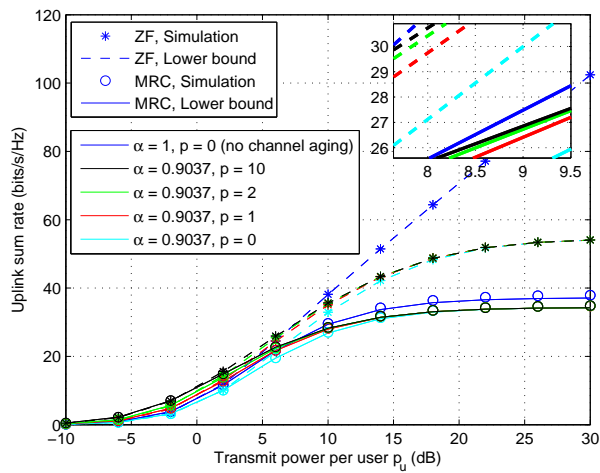}
    \caption{Uplink sum-rate versus the transmit power $p_u$ for $K = 10$, $M = 128$, and $f_DT_s = 0.1$.}\label{fig:rate_snr}
  \end{figure}

Fig. \ref{fig:rate_fD_Ts} investigates the impact of channel prediction on the achievable sum-rate lower bound. Note that the curves associated with perfect CSI are obtained from \cite[(17) and (21)]{H.Q.Ngo}, while the curves associated with current CSI are based on \eqref{eq:R_bound_mrc} and \eqref{eq:R_bound_zf} by setting $\alpha = 1$. Intuitively, as the normalized Doppler shift $f_DT_s$ becomes large, i.e., for stronger channel aging effect, the sum-rate loss becomes increasingly substantial. Also, the higher the prediction order, the larger the sum-rate gain. In addition, the benefit of channel prediction tends to be more significant when the channel aging effect is less severe, i.e., $f_DT_s$ is small. This is rather expected, since the predicted CSI becomes more accurate in such scenarios. Finally, it is observed that the predicted CSI case achieves a higher rate than the current CSI case when $f_DT_s$ is small, while its performance degrades substantially when $f_DT_s$ is large, and becomes worse than that with current CSI case. This can be explained as follows: When the channel varies slowly, channel prediction which uses multiple channel observations can provide more accurate CSI than channel estimation which only uses one channel observation. On the other hand, when the channel varies fast, channel prediction becomes inaccurate, and is less reliable than the current CSI. However, it is also worth pointing out that the achievable rate with channel prediction can not exceed the rate achieved with perfect CSI.

  \begin{figure}[!ht]
    \centering
    \includegraphics[scale=0.6]{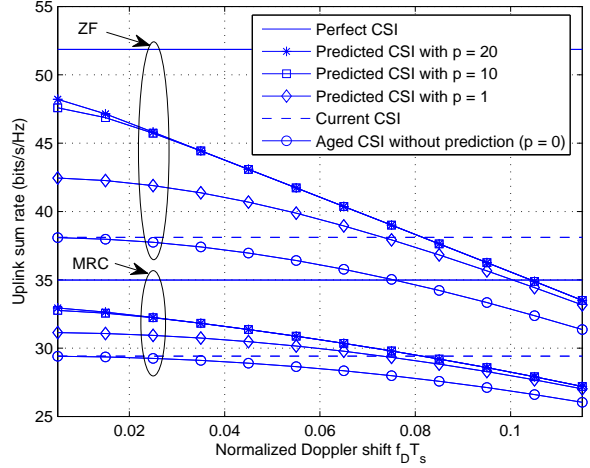}
    \caption{Lower bound on the uplink sum-rate versus the normalized Doppler shift $f_DT_s$ for $K = 10$, $M = 128$, and $p_u = 10$ dB.}\label{fig:rate_fD_Ts}
  \end{figure}

Fig. \ref{fig:rate_m_scaling} shows the lower bound on the achievable sum-rate versus the number of BS antennas when the transmit power of each user is scaled down by $1/\sqrt{M}$. As predicted by Corollary \ref{coro:MRC_aged_CSI} and \ref{coro:MRC_predicted_CSI}, the achievable sum-rate converges to a non-zero limit when the number of antennas $M$ becomes large. As the prediction order increases, the sum-rate improves. Nevertheless, we also observe that, the sum-rate gain due to increasing the prediction order from $p = 1$ to $p = 2$ is significantly smaller than the sum-rate gain from increasing $p = 0$ to $p = 1$. Recall from Corollary \ref{coro:MRC_predicted_CSI}, that the gain due to channel prediction is manifested through the SNR enhancement factor $\sum\limits_{j = 1}^p{\alpha^{2j}}$. When $\alpha$ is relatively small, the contribution of a higher $p$ diminishes quickly, which explains the above behavior.

\begin{figure}[!ht]
    \centering
    \includegraphics[scale=0.6]{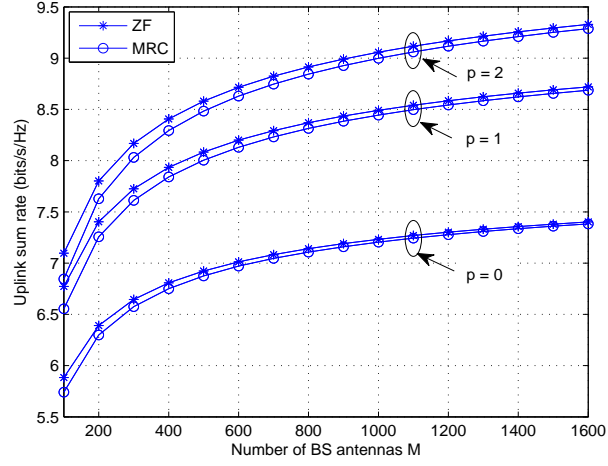}
    \caption{Lower bound on the uplink sum-rate versus the number of BS antennas $M$ for $K = 10$, $f_DT_s = 0.1$, $p_u = E_u/\sqrt{M}$ with $E_u = 15$ dB.}\label{fig:rate_m_scaling}
  \end{figure}

\subsection{Single-cell downlink scenario}
Fig. \ref{fig:downlink_rate_snr} verifies the correctness of the analytical expression given in (\ref{eqn:11}). As we can readily observe, the analytical results are in perfect agreement with the simulation curves.
\begin{figure}[ht]
    \centering
    \includegraphics[scale=0.6]{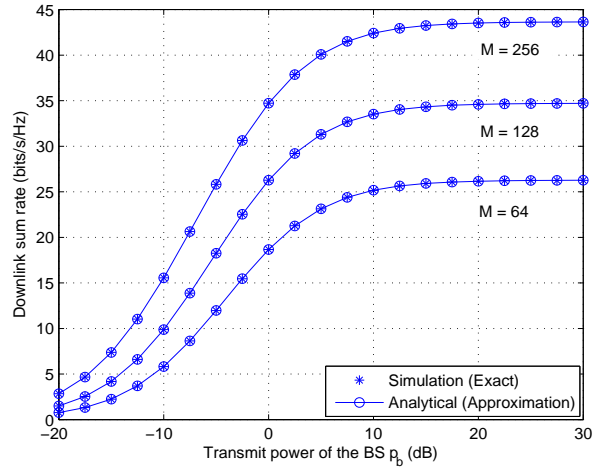}
    \caption{Downlink sum-rate with MRT precoder versus the transmit power $p_b$ for $K = 10$, $M = 64$, $p_p = 10$ dB, and $f_D T_s = 0.1$.}\label{fig:downlink_rate_snr}
\end{figure}

Fig. \ref{fig:downlink_rate_m} illustrates the power scaling law. When $M$ grows large, the analytical results converge to the asymptote. Also, the speed of convergence depends on the transmit power, the smaller the transmit power, the faster the convergence speed. In addition, as can be seen, in the downlink scenario, the power scaling law is also $1/\sqrt{M}$, and identical to the uplink scenario.
\begin{figure}[ht]
    \centering
    \includegraphics[scale=0.6]{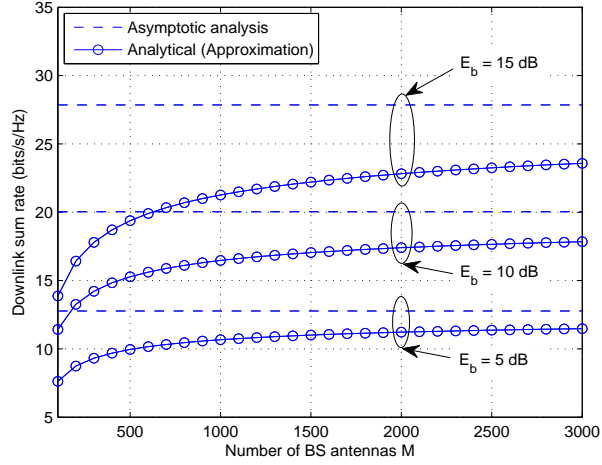}
    \caption{Downlink sum-rate versus the number of BS antennas $M$ with MRT precoder for $K = 5$, $f_DT_s = 0.1$, $p_p = \tau E_u/\sqrt{M}$ with $E_u = 3$ dB ,and $p_b = E_b/\sqrt{M}$.}\label{fig:downlink_rate_m}
\end{figure}
\subsection{Multi-cell scenario}
In this section, we examine the impact of channel aging and channel prediction on the achievable sum-rate of cellular massive MIMO systems. As in \cite{H.Q.Ngo}, we assume $\beta_{bbk} = 1$, and $\beta_{bck} = 0.32$ $(c \neq b)$ for all $k$ $(1 \leq k \leq K)$, and consider a setting with $C = 7$ cells.

    \begin{figure}[ht]
    \centering
    \includegraphics[scale=0.6]{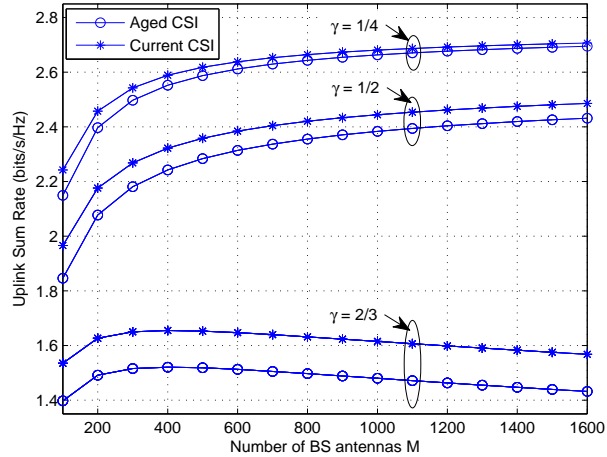}
    \caption{Uplink sum-rate versus the number of BS antennas $M$ for MRC receivers with aged CSI for $K = 10$, $f_DT_s = 0.1$, and $p_u = E_u/M^\gamma$ with $E_u = 15$ dB.}\label{fig:rate_multicell_m_scaling1}
  \end{figure}

Fig. \ref{fig:rate_multicell_m_scaling1} illustrates the power scaling law of multi-cell massive MIMO systems with aged CSI. As expected, we see that when $\gamma>1/2$, the achievable sum-rate gradually decreases, and eventually reduces to zero as $M$ approaches infinity. While for $\gamma\leq 1/2$, the achievable sum-rate converges to a deterministic non-zero value. In addition, when $\gamma=1/2$, we see that, regardless of the antenna number $M$, there exists a constant gap between the two curves associated to scenarios with aged CSI and current CSI, respectively, elucidating the detrimental effect of channel aging. On the other hand, when $\gamma<1/2$, the two curves overlap for sufficiently large $M$, indicating the vanishing effect of channel aging, as indicated by (\ref{eqn:qbk}).

\begin{figure}[ht]
    \centering
    \includegraphics[scale=0.6]{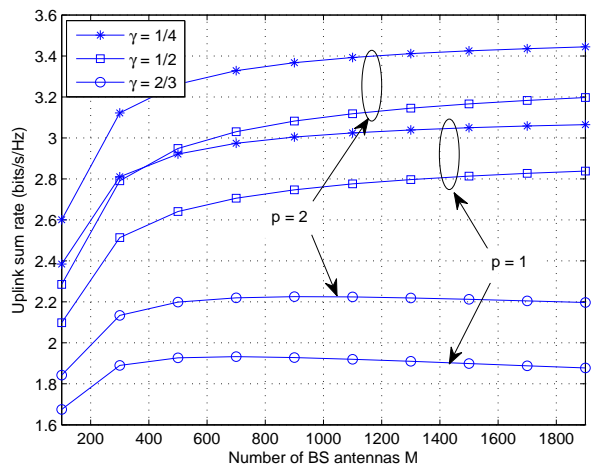}
    \caption{Uplink sum-rate versus the number of BS antennas $M$ for MRC receivers with channel prediction for $K = 10$, $f_DT_s = 0.1$, and $p_u = E_u/M^\gamma$ with $E_u = 15$ dB. }\label{fig:rate_multicell_m_scaling}
\end{figure}

Fig. \ref{fig:rate_multicell_m_scaling} depicts the power scaling law of multi-cell massive MIMO systems with channel prediction, given by \eqref{eq:mul:cell:predicted}. As expected, the achievable sum-rate converges to a non-zero limit when $\gamma\leq 1/2$, and reduces to zero when $\gamma>1/2$ as the number of antennas $M$ increases. Moreover, as the predictor order $p$ increases, the non-zero limit becomes larger.

\section{Conclusion}
This paper studied the achievable sum-rate of uplink massive MIMO systems taking into account the channel aging effect. Specifically, we derived tractable lower bounds of the sum-rate for both MRC and ZF receivers with/without channel prediction, which are valid for arbitrary number of antennas and users. In addition, we characterized the impact of channel aging effect and channel prediction on the power scaling law. The findings of the paper suggest that aged CSI degrades the corresponding achievable sum-rate, and the more severe the channel aging effect, the more significant reduction of the sum-rate. Moreover, channel prediction enhances the sum-rate, and the higher the predictor order, the better the sum-rate performance. In addition, it is shown that the benefits due to channel prediction are more pronounced in the scenario, where the channel aging effect is not severe. Finally, it was found that both in the single-cell and the multi-cell scenario, the transmit power of each user can be scaled down at most by $1/\sqrt{M}$ in the presence of channel aging, which indicates that aged CSI does not degrade the power scaling law, and channel prediction does not improve the power scaling law. Similarly, the single-cell downlink scenario analysis was presented, which concludes that in the single-cell downlink scenario, the same power scaling law $1/\sqrt{M}$ is achieved as in the single-cell uplink scenario. However, unlike the single-cell scenario, the achievable rate in the multi-cell scenario converges to a non-zero limit when each user does not cut down the transmit power by the maximum limit, i.e., $1/M^\gamma$ with $0<\gamma<1/2$, due to the effect of pilot contamination.

\appendices
\section{Proof of Theorem \ref{theo:MRC_bound}}\label{app:theo:MRC_bound}
By substituting ${\hat{\bf a}}_k[n+1] = {\bar{\bf g}}_k[n+1] = \alpha {\hat{\bf g}}_k[n]$ into \eqref{eq:R_k}, we obtain
\begin{align}\label{eq:R_k_mrc}
 R_k^\text{mrc} = {\rm E}\left\{ { \log_2 \left(1+ \frac{p_u\alpha^2||{\hat{\bf g}}_k[n]||^2}{p_u \alpha^2 \sum\limits_{i = 1, i \neq k}^K {|{\tilde g}_i[n]|^2}
 + p_u \sum\limits_{i = 1}^K {\left(\beta_i - \alpha^2\frac{p_p\beta_i^2}{1+p_p\beta_i}\right)}
 + 1} \right)} \right\},
\end{align}
where ${\tilde g}_i[n] \triangleq \frac{{\hat{\bf g}}_k^\dag[n] {\hat{\bf g}}_i[n]}{||{\hat{\bf g}}_k[n]||}$. To this end, noticing that ${\text{log}}_2\left(1+\frac{1}{x}\right)$ is a convex function with respect to $x$, the following tight lower bound can be obtained by applying Jensen's inequality \cite{H.Q.Ngo}
\begin{align}
   {\tilde R}_k^\text{mrc} =  { \log_2 \left(1+ \left({\rm E}\left\{ \frac{p_u \alpha^2 \sum\limits_{i = 1, i \neq k}^K {|{\tilde g}_i[n]|^2}
 + p_u\sum\limits_{i = 1}^K {\left(\beta_i - \alpha^2\frac{p_p\beta_i^2}{1+p_p\beta_i}\right)}
 + 1}{p_u\alpha^2||{\hat{\bf g}}_k[n]||^2} \right\} \right)^{-1} \right)} ,
\end{align}

By noticing that conditioned on ${\hat{\bf g}}_k[n]$, ${\tilde g}_i[n]$ is a Gaussian random variable with zero mean and variance $\frac{p_p\beta_i^2}{1+p_p\beta_i}$, which does not depend on ${\hat{\bf g}}_k[n]$, it is concluded that ${\tilde g}_i[n]$ is Gaussian distributed and independent of ${\hat{\bf g}}_k[n]$, i.e., ${\tilde g}_i[n] \sim {\cal{CN}}\left( 0, \frac{p_p\beta_i^2}{1+p_p\beta_i} \right)$. As a result, we obtain
\begin{align}\label{eq:exp:mrc}
 & {\rm E}\left\{ \frac{p_u \alpha^2 \sum\limits_{i = 1, i \neq k}^K {|{\tilde g}_i[n]|^2}
 + p_u\sum\limits_{i = 1}^K {\left(\beta_i - \alpha^2\frac{p_p\beta_i^2}{1+p_p\beta_i}\right)}
 + 1}{p_u\alpha^2||{\hat{\bf g}}_k[n]||^2} \right\}\notag \\
& = \left( p_u \alpha^2 \sum\limits_{i = 1, i \neq k}^K {\rm E}\left\{ {|{\tilde g}_i[n]|^2}\right\} + p_u\sum\limits_{i = 1}^K {\left(\beta_i - \alpha^2\frac{p_p\beta_i^2}{1+p_p\beta_i}\right)}
 + 1  \right) {\rm E}\left\{ \frac{1}{p_u\alpha^2 ||{\hat{\bf g}}_k[n]||^2}\right\}.
\end{align}

Given that
\begin{align}\label{eq:exp:g_tilde}
  {\rm E}\left\{ {|{\tilde g}_i[n]|^2}\right\} = \frac{p_p\beta_i^2}{1+p_p\beta_i},
\end{align}
the remaining task is to evaluate ${\rm E}\left\{ \frac{1}{||{\hat{\bf g}}_k[n]||^2}\right\}$, which, according to \cite[Lemma 2.10]{A.M.Tulino}, can be computed as
\begin{align}\label{eq:exp:g_breve}
    {\rm E}\left\{ \frac{1}{||{\hat{\bf g}}_k[n]||^2}\right\} = \frac{1}{M-1} \frac{1+p_p\beta_k}{p_p\beta_k^2}.
\end{align}

To this end, after plugging \eqref{eq:exp:g_tilde} and \eqref{eq:exp:g_breve} into \eqref{eq:exp:mrc}, we arrive at the desired result.
\section{Proof of Theorem \ref{theo:ZF_bound}}\label{app:theo:ZF_bound}
With ZF detector, ${\hat{\bf A}}^\dag[n+1] = ({\bar{\bf G}}^\dag[n+1] {\bar{\bf G}}[n+1])^{-1}{\bar{\bf G}}^\dag[n+1]$, or ${\hat{\bf A}}^\dag[n+1]{\bar{\bf G}}[n+1] = {\bf I}_K$. Thus, \eqref{eq:R_k} becomes
\begin{align}
  R_k^\text{zf} = {\rm E}\left\{ { \log_2 \left(1+ \frac{p_u}{\left(p_u\sum\limits_{i = 1}^K {\left(\beta_i - \alpha^2\frac{\tau p_u \beta_i^2}{1+\tau p_u \beta_i}\right)}
 + 1 \right)\left[\left({\bar{\bf G}}^\dag[n+1]{\bar{\bf G}}[n+1]\right)^{-1}\right]_{k,k}} \right)} \right\}.
\end{align}

From \eqref{eq:g_bar_aged}, we have
\begin{align}\label{eq:exp:G_bar}
  {\rm E}\left\{ \left[\left({\bar{\bf G}}^\dag[n+1]{\bar{\bf G}}[n+1]\right)^{-1}\right]_{k,k} \right\} &= \frac{1}{\alpha^2} {\rm E}\left\{ \left[\left({\hat{\bf G}}^\dag[n]{\hat{\bf G}}[n]\right)^{-1}\right]_{k,k} \right\} \\
  &= \frac{1}{\alpha^2(M-K)} \frac{1+p_p\beta_k}{p_p\beta_k^2},
\end{align}
which completes the proof.

 \bibliographystyle{IEEE}

\begin{thebibliography}{10}

\bibitem{J.G.Andrews}
J. G. Andrews, S. Buzzi, Wan Choi, S. V. Hanly, A. Lozano, A. C. K. Soong, and J. C.  Zhang, ``What will 5G be?,'' {\em IEEE J. Sel. Areas Commun}, vol. 32, no. 6, pp. 1065--1082, Jun. 2014.

\bibitem{T.L.Marzetta}
T. L. Marzetta, ``Noncooperative cellular wireless with unlimited numbers of base station antennas,'' {\em IEEE Trans. Wireless Commun.}, vol. 9, no. 11, pp. 3590--3600, Nov. 2010.

\bibitem{E.G.Larsson}
E. G. Larsson, O. Edfors, F. Tufvesson, and T. L. Marzetta, ``Massive MIMO for next generation wireless systems,'' {\em IEEE Commun. Mag.}, vol. 52, no. 2, pp. 186--195, Feb. 2014.

\bibitem{F.Rusek}
F. Rusek, D. Persson, B. K. Lau, E. G. Larsson, T. L. Marzetta, O. Edfors, and F. Tufvesson, ``Scaling up MIMO: Opportunities and challenges with very large arrays,'' {\em IEEE Signal Process. Mag.}, vol. 30, no. 1, pp. 40--60, Jan. 2013.


\bibitem{Q.Zhang}
Q. Zhang, S. Jin, K.-K. Wong, H. Zhu, and M. Matthaiou, ``Power scaling of uplink massive MIMO systems with arbitrary-rank channel means,'' {\em IEEE J. Sel. Topics Signal Process.}, vol. 8, no. 5, pp. 966--981, Oct. 2014.

\bibitem{C.Kong}
C. Kong, C. Zhong, M. Matthaiou, and Z. Zhang, ``Performance of downlink massive MIMO in Ricean fading channels with ZF precoder,'' in {\it Proc.} {\em IEEE ICC}, June 2015.

\bibitem{J.Hoydis}
J. Hoydis, S. ten Brink, and M. Debbah, ``Massive MIMO in the UL/DL of cellular networks: How many antennas do we need?'' {\em IEEE J. Sel. Areas Commun.}, vol. 31, no. 2, pp. 160--171, Feb. 2013.

\bibitem{J.Zhang}
J. Zhang, C-K. Wen, S. Jin, X. Gao, and K-K. Wong, ``On capacity of large-scale MIMO multiple access channels with distributed sets of correlated antennas,'' {\em IEEE J. Sel. Areas Commun.}, vol. 31, no. 2, pp. 133--148, Feb. 2013.

\bibitem{H.Ngo0}
H. Q. Ngo, E. G. Larsson, and T. L. Marzetta, ``The multicell multiuser MIMO uplink with very large antenna arrays and a finite-dimensional channel,'' {\em IEEE Trans. Commun.}, vol. 61, no. 6, pp. 2350--2361, June 2013.

\bibitem{C.Masouros}
C. Masouros, M. Sellathurai, and T. Ratnarajah, ``Large-scale MIMO transmitters in fixed physical spaces: The effect of transmit correlation and mutual coupling,'' {\em IEEE Trans. Commun.}, vol. 61, no. 7, pp. 2794--2804, July 2013.


\bibitem{J.Jose}
J. Jose, A. Ashikhmin, T. L. Marzetta, and S. Vishwanath, ``Pilot contamination and precoding in multi-cell TDD systems'' {\em IEEE Trans. Wireless Commun.}, vol. 10, no. 8, pp. 2640--2651, Aug. 2011.


\bibitem{N.Krishnan2}
N. Krishnan, R. D. Yates, and N. B. Mandayam, ``Uplink linear receivers for multi-cell multiuser MIMO with pilot contamination: Large system analysis,'' {\em IEEE Trans. Wireless Commun.}, vol. 13, no. 8, pp. 4360--4373, Aug. 2014.

\bibitem{J.Choi1}
J. Choi, D. J. Love, and P. Bidigare, ``Downlink training techniques for FDD massive MIMO systems: Open-loop and closed-loop training with memory,'' {\em IEEE J. Sel. Topics Signal Process.}, vol. 8, no. 5, pp. 802-814, Oct. 2014.

\bibitem{S.Noh}
S. Noh, M. Zoltowski, Y. Sung, and D. Love, ``Pilot beam pattern design for channel estimation in massive MIMO systems,'' {\em IEEE J. Sel. Topics Signal Process.}, vol. 8, no. 5, pp. 787 -- 801, Oct. 2014.

\bibitem{H.Q.Ngo}
H. Q. Ngo, E. G. Larsson, and T. L. Marzetta, ``Energy and spectral efficiency of very large multiuser MIMO systems,'' {\em IEEE Trans. Commun.}, vol. 61, no. 4, pp. 1436--1449, Apr. 2013.

\bibitem{C.K.Wen}
C.-K. Wen, S. Jin, K.-K. Wong, J.-C. Chen, and P. Ting, ``Channel estimation for massive MIMO using Gaussian-mixture Bayesian learning,'' {\em IEEE Trans. Wireless Commun.}, vol. 14, no. 3, pp. 1356-1368, Mar. 2015.

\bibitem{J.Li}
J. Li, X. Su, J. Zeng, Y. Zhao, S. Yu, L. Xiao, and X. Xu, ``Codebook design for uniform rectangular arrays of massive antennas,'' Proceedings of IEEE Vehicular Technology Conference, Jun. 2013.

\bibitem{D.Ying}
D. Ying, F. W. Vook, T. A. Thomas, D. J. Love, and A. Ghosh, ``Kronecker product correlation model and limited feedback codebook design in a 3D channel model,'' in Proc. {\em IEEE ICC}, Jun. 2014.

\bibitem{J.Choi2}
J. Choi, Z. Chance, D. J. Love, and U. Madhow, ``Noncoherent trellis-coded quantization: A practical limited feedback technique for massive MIMO systems,'' {\em IEEE Trans. Commun.}, vol. 61, no. 12, pp. 5016-5029, Dec. 2013.

\bibitem{J.Choi3}
J. Choi, D. J. Love, and T. Kim, ``Trellis-extended codebooks and successive phase adjustment: A path from LTE-advanced to FDD massive MIMO systems,'' {\em IEEE Trans. Wireless Commun.}, vol. 14, no. 4, pp. 2007--2016, Apr. 2015.

\bibitem{Emil2}
E. Bj\"{o}rnson, J. Hoydis, M. Kountouris, and M. Debbah, ``Massive MIMO systems with non-ideal hardware: Energy efficiency, estimation, and capacity limits,'' {\em IEEE Trans. Inf. Theory}, vol. 60, no. 11, pp. 7112--7139, Nov. 2014.

\bibitem{X.Zhang0}
X. Zhang, M. Matthaiou, M. Coldrey, and E. Bj\"{o}rnson, ``Energy efficiency optimization in hardware-constrained large-scale MIMO systems,'' in {\it Proc.} {\em IEEE ISWCS}, Aug. 2014, pp. 992--996.

\bibitem{A.Pitarokoilis}
A. Pitarokoilis, S. K. Mohammed, and E. G. Larsson, ``Uplink performance of time-reversal MRC in massive MIMO systems subject to phase noise,'' {\em IEEE Trans. Wireless Commun.}, vol. 14, no. 2, pp. 711--723, Feb. 2015.

\bibitem{K.T.Truong1}
K. T. Truong and R. W. Heath Jr., ``Effects of channel aging in massive MIMO systems,'' {\em J. Commun. Netw.}, vol. 16, no. 4, pp. 338--351, Aug. 2013.

\bibitem{A.K.Papazafeiropoulos1}
A. K. Papazafeiropoulos and T. Ratnarajah, ``Linear precoding for downlink massive MIMO with delayed CSIT and channel prediction,'' in {\it Proc.} {\em IEEE WCNC}, Apr. 2014, pp. 809--914.

\bibitem{A.K.Papazafeiropoulos2}
A. K. Papazafeiropoulos and T. Ratnarajah, ``Uplink performance of massive MIMO subject to delayed CSIT and anticipated channel prediction,'' in {\it Proc.} {\em IEEE ICASSP}, May 2014, pp. 3162--3165.
\bibitem{L.You}
L. You, X. Gao, X. Xia, N. Ma and Y. Peng, ``Massive MIMO transmission with pilot reuse in single cell'', in {\it Proc.} {\em IEEE ICC}, Jun. 2014, pp. 4783--4788.

\bibitem{Emil5}
E. Bj\"{o}rnson, J. Hoydis, M. Kountouris, and M. Debbah, ``Massive MIMO systems with non-ideal hardware: Energy efficiency, estimation, and capacity limits,'' {\em IEEE Trans. Inform. Theory}, vol. 60, no. 11, pp. 7112--7139, Nov. 2014.

\bibitem{Y.Lim}
Y. Lim, C. Chae, and G. Caire, ``Performance analysis of massive MIMO for cell-boundary users,'' accepted to appear in {\em IEEE Trans. Wireless Commun.}, 2015.

\bibitem{A.Muller}
A. Muller, A. Kammoun, E. Bjornson, and M. Debbah, ``Linear precoding based on polynomial expansion: Reducing complexity in massive MIMO,'' available at: arxiv.org/pdf/1310.1806v4.pdf

\bibitem{B.Panzner}
B. Panzner, W. Zirwas, S. Dierks, M. Lauridsen, P. Mogensen, K. Pajukoski, and D. Miao, ``Deployment and implementation strategies for massive MIMO in 5G,'' in {\it Proc} {\em IEEE GLOBECOM}, Dec. 2014, pp. 346--351.

\bibitem{N.Vucic}
N. Vu\v{c}i\'{c} and H. Boche, ``Robust QoS-constrained optimization of downlink multiuser MISO systems,'' {\em IEEE Trans. Signal Process.}, vol. 57, no. 2, pp. 714--725, Feb. 2009.

\bibitem{M.F.Hanif}
M. F. Hanif, L.-N. Tran, A. T\"{o}lli, M. Juntti, and S. Glisic, ``Efficient solutions for weighted sum rate maximization in multicellular networks with channel uncertainties,'' {\em IEEE Trans. Signal Process.}, vol. 61, no. 22, pp. 5659--5674, Nov. \textcolor{blue}{2013}.

\bibitem{E.Bjornson}
E. Bj\"{o}rnson, G. Zheng, M. Bengtsson, and B. Ottersten, ``Robust monotonic optimization framework for multicell MISO systems,'' {\em IEEE Trans. Signal Process.}, vol. 60, no. 5, pp. 2508--2523, May 2012.

\bibitem{A.M.Tulino}
A. M. Tulino and S. Verd\'{u}, ``Random matrix theory and wireless communications,'' {\em Foundations and Trends in Communications and Information Theory}, vol. 1, no. 1, pp. 1--182, Jun. 2004.


%




\bibitem{X.Zhang}
X. Zhang, {\em Matrix Analysis and Applications,} 2nd Ed. Beijing: Tsinghua University Press, 2011.



 \end{thebibliography}

\begin{footnotesize}
 
 \end{footnotesize}

\end{document}